\documentclass[12pt, draftclsnofoot, onecolumn]{IEEEtran}
\usepackage{amsmath,amssymb,cite}
\usepackage{mathtools}
\usepackage[small]{caption}
\usepackage{amsthm,multirow,color,amsfonts}
\usepackage{tabulary}
\usepackage{subfigure}
\usepackage{graphicx}
\usepackage{setspace}
\usepackage{enumerate}	
\usepackage{comment}

%% Reduces space around equations and figures 
%\setlength\abovedisplayskip{3pt plus 2pt minus 2pt} 	% Reduce space before equation
%\setlength\belowdisplayskip{3pt plus 2pt minus 2pt}	% Reduce space after equation
%\setlength\textfloatsep{10pt plus 2pt minus 2pt}		% Reduce space between figure caption and text

\makeatletter
\g@addto@macro{\normalsize}{%
	\setlength{\abovedisplayskip}{3pt plus 2pt minus 2pt}
	\setlength{\abovedisplayshortskip}{3pt plus 2pt minus 2pt}
	\setlength{\belowdisplayskip}{3pt plus 2pt minus 2pt}
	\setlength{\belowdisplayshortskip}{3pt plus 2pt minus 2pt}
	\setlength{\textfloatsep}{10pt plus 2pt minus 2pt}
}
\makeatother

\title{Throughput Maximization for Delay-Sensitive Random Access Communication}
\author{Derya~Malak,~Howard~Huang, and~Jeffrey~G.~Andrews
\thanks{Part of the manuscript has been presented at the 2017 IEEE ICC Wireless Communications Symposium \cite{MaHuAnd2017ICC}.}
\thanks{D. Malak is with the Research Laboratory of Electronics (RLE), at The Massachusetts Institute of Technology, Cambridge, MA 02139 USA (email: deryam@mit.edu).} 
\thanks{H. Huang is with Bell Labs, Nokia, Murray Hill, NJ 07974, USA (e-mail: howard.huang@nokia-bell-labs.com).}
\thanks{J. G. Andrews is with the Wireless Networking and Communications Group (WNCG), The University of Texas at Austin, Austin, TX 78712, USA (email: jandrews@ece.utexas.edu).} 
\thanks{This work was done when D. Malak was with WNCG, The University of Texas at Austin. %, Austin, TX 78712, USA. 
\hfill Last revised: {\today}.}}

\doublespacing

\newtheorem{prop}{Proposition}

\newtheorem{cor}{Corollary}

\newtheorem{lem}{Lemma}

\newcommand{\fbl}{\mathrm{FBL}}      
\newcommand{\snr}{\mathrm{SNR}}    
\newcommand{\bfsnr}{\mathrm{{\bf SNR}}}   
\newcommand{\sinr}{\mathrm{SINR}}       
  
\newcommand{\SINR}{{\sf{SINR}}}    
\newcommand{\SINRone}{{\sf{SINR}}_1}%{{\sf{SINR}(\mathcal{K}_1)}}       
%{{\sf{SINR}(\mathcal{K}_2)}}     
%{{\sf{SINR}(\mathcal{K}_i)}}     
\newcommand{\SINRm}{{\sf{SINR}}_m}%{{\sf{SINR}(\mathcal{K}_m)}}   
\newcommand{\SINRM}{{\sf{SINR}}_M}%{{\sf{SINR}(\mathcal{K}_M)}}    
   
\newcommand{\sinrone}{{\sf{\zeta}}_1}     
  
\newcommand{\sinri}{{\sf{\zeta}}_i}
\newcommand{\sinrm}{{\sf{\zeta}}_m}  
\newcommand{\sinrM}{{\sf{\zeta}}_M}      
\newcommand{\Cfbl}{C}%{C_{\sf{FBL}}}

\newcommand{\Pfail}{{\mathrm{P}_{\sf Fail}}}

\newcommand{\PfailIBL}{{\mathrm{P}_{\sf Fail, IBL}}}
\newcommand{\PfailFBL}{{\mathrm{P}_{\sf Fail, FBL}}}

\DeclareMathOperator\Q{{Q}}

\newcommand{\Bopt}{B_{\sf opt}}
\newcommand{\Mopt}{M_{\sf opt}}
\newcommand{\Lopt}{\lambda_{\sf opt}}
\newcommand{\kopt}{k^*}
\newcommand{\Gopt}{\Gamma_{\sf opt}}
\newcommand{\eber}{{\epsilon_{\sf FBL}}}

\newcommand{\ICI}{{\sf ICI}}
\newcommand{\GF}{\Gamma_{\rm o}}
\begin{document}

\maketitle
\begin{abstract} 
%We consider a single cell uplink in which many devices randomly transmit a data payload to the base station. Given a fixed latency constraint per device, we propose a time and frequency slotted random access scheme with retransmissions, which when necessary, are Chase combined at the receiver. We analyze the proposed setting at constant SNR. We characterize the scaling of random access throughput versus the latency, by optimizing the number of retransmissions and the number of frequency bins. For infinite block length (IBL), we conclude that at low SNR devices should completely share the time and frequency resources. For high SNR, however, the number of frequency bins should scale with the intensity of the simultaneous Poisson arrivals, and the slot duration for each retransmission is determined by the outage tolerance. Since infinite packet sizes are not possible, we extend our model to the finite block length (FBL) regime and characterize the gap versus the IBL regime. We also provide some new results for FBL capacity to bound the probability of outage. The proposed random access model gives an upper bound for the total uplink traffic that can be successfully decoded for a single receive antenna given the latency constraint, and provides insights for 5G cellular system design.

Future 5G cellular networks supporting ultra-reliable, low-latency communications (URLLC) could employ random access communication to reduce the overhead compared to scheduled access techniques used in 4G networks. We consider a wireless communication system where multiple devices transmit payloads of a given fixed size in a random access fashion over shared radio resources to a common receiver. We allow retransmissions and assume Chase combining at the receiver. The radio resources are partitioned in the time and frequency dimensions, and we determine the optimal partition granularity to maximize throughput, subject to given constraints on latency and outage.  In the regime of high and low signal-to-noise ratio (SNR), we derive explicit expressions for the granularity and throughput, first using a Shannon capacity approximation and then using finite block length analysis. Numerical results show that the throughput scaling results are applicable over a range of SNRs. The proposed analytical framework can provide insights for resource allocation strategies in general random access systems and in specific 5G use cases for massive URLLC uplink access. 
\end{abstract}

%\begin{IEEEkeywords}
%Machine-type communication, random access, Chase combining, latency, resource allocation, single cell.
%\end{IEEEkeywords}

\maketitle

%%%%%%%%%%%%%%%%%%%%%%%%%%%%%%%%%%%%%%%%%%%%%%%%%%%%%%%%%%%%%%%%%%%%%%%%%%%%%%%%%%
\section{Introduction}
\label{intro} 
Ultra-reliable, low-latency communication (URLLC) for a large number of devices will be an important use case for future 5G communication networks  \cite{DhiHuVis2015}. This so-called machine-type communication (MTC) is often characterized by small payload sizes transmitted by a very large number of devices, and is not well served by the 4G LTE air interface which is designed to support larger payloads transmitted by fewer devices. 

For multiple devices communicating on the uplink to a base, LTE uses the following (simplified) three-phase procedure. In the first phase,  devices wanting to transmit notify the base station on an uplink random access channel. 
In the second phase, the base station decides how to best allocate resources among the devices and notifies them through a downlink message. In the third phase, the devices transmit to the base over the assigned uplink resources. The first two phases can be thought of as overhead in ensuring interference-free transmission during the scheduled access. If the payload size is sufficiently large, then the overhead is justified. On the other hand, for small payloads found in MTC, the overhead is not justified, and it is worthwhile to consider transmitting the payload itself in a random access fashion \cite{Dhillon2014tcomm, Laya2014}. For a given set of spectral resources, the general challenge is to design a random access protocol to maximize throughput subject to quality of service constraints.  

In this paper, we use a Poisson arrival process to model devices entering the system, so the throughput can be characterized by the mean arrival rate of this process which we denote as $\lambda$. Each device attempts to transmit a payload of fixed size $L$ bits over shared resources of bandwidth $W$ Hertz which is partitioned into $B$ frequency bins and into time slots. We refer to a given frequency bin and time slot as a time-frequency slot (TFS). For a given transmission attempt, a device transmits during a slot over a randomly chosen bin. Depending on the arrival rate, multiple devices could transmit over the same TFS.  If a given transmission is not successfully decoded (as a result of the signal-to-interference plus noise ratio (SINR) falling below a threshold), the device can retransmit its signal. For a given payload, a device can transmit up to $M$ times within a given time constraint of $T$ seconds.  At the receiver, Chase combining is used to combine the received energy over multiple transmission attempts. Chase combining is a common form of hybrid automatic repeat request (HARQ) which has been shown to increase throughput in relatively poor channel conditions and has been standardized in 3G and 4G cellular standards. If a device is not able to have its packet decoded within $M$ transmission attempts (or equivalently, within time $T$ seconds), then its transmission is a failure. The overall problem is to determine the optimal values of $B$ and $M$ to maximize the Poisson mean arrival rate $\lambda$, subject to a constraint on the probability of failure which we denote as $\delta$.  

Our analysis therefore characterizes tradeoffs among throughput $\lambda$, reliability $\delta$, latency $T$, and bandwidth $W$, using a practical receiver which decodes a desired device's transmission by treating interference as noise. One could use our framework to extract insights on the throughput performance of uplink 5G URLLC, for example, by coupling a low failure probability (e.g., $\delta = 0.001$) with a low latency requirement (e.g., $T = 0.005$ seconds). 

%Emerging Internet of Things (IoT) examples include sensor  and time of arrival measurements \cite{Laya2014}, smart grids for power distribution automation, industrial manufacturing, control and automation applications \cite{Johansson2015}, traffic safety, and monitoring of materials in industrial plants \cite{METIS2013}.  HH: not necessary to have these examples. 
%Random access (RA) is a critical challenge in IoT networks, because of the massive number of devices that wish to access the channel and their frequently very short payloads \cite{Dhillon2013twc,Dhillon2014tcomm,Durisi2015}. Furthermore, there is signaling overhead for scheduled access \cite{Dhillon2013twc}, and it doesn't make sense to use coordinated transmissions %strategies 
%if the payload size is small. In such scenarios it can be envisioned that the devices never move into a scheduled access phase \cite{Dhillon2013twc}, \cite{Dhillon2014tcomm}, instead just use RA %using the RA process 
%\cite{Pratas2014}, to send a small data payload. However, a large number of RA transmissions can overload the random access channel (RACH), resulting in a significant access delay. Thus, MTC will require sophisticated access management and resource allocation to prevent RACH congestion \cite{DhiHuVis2015}. 

%%%%%%%%%%%%%%%%%%%%%%%%%%%%%%%%%%%%%%%%
\subsection{Related Work and Motivation}
\label{relatedwork}

A number of recent references describe different strategies for accommodating massive uplink access including algorithms for collision resolution \cite{Madueno2014}, spatial diversity  with multiple base antennas \cite{Johansson2015},  load control via pricing algorithms  \cite{Koseoglu2016}, and interference cancellation  \cite{Dovelos2017}. Because of possibly limited energy resources of devices in these use cases, another body of work considers both throughput and energy efficiency as metrics \cite{Biral2015},\cite{MalakDhillonAndrews2016}, \cite{Dhillon2013twc}. 

In terms of more fundamental results, reference \cite{YuITA2017} provides a degrees-of-freedom characterization of throughput when considering both device identification and communication. 
In \cite{Dhillon2013twc}, an optimal transmitter and receiver strategy for maximizing the number of devices transmitting data with a fixed payload size is derived. The optimal receiver jointly decodes a subset of the devices using an interference canceller, where the subset is determined randomly based on the target outage rate. 

Because ideal interference cancellation is not realizable in practice, especially for a large number of devices, reference \cite{MaHuAnd2017ICC} characterizes the throughput of a suboptimal but more practical random access system where both the time and frequency domains are slotted. The receiver uses conventional single-user detection which demodulates a desired user's data stream by treating other users' interfering signals as noise, and as a simplifying assumption, the analysis uses Shannon capacity to approximate the SINR threshold for a failed transmission. This approximation leads to an optimistic bound on the throughput, and it becomes exact in the limit of infinite coding block lengths. 

The current paper reviews and extends the results in \cite{MaHuAnd2017ICC} by incorporating recent characterizations of capacity under finite block length transmissions \cite{Polyanskiy2010,YanCaiDurPol2015, Durisi2016tcomm}. This extension is especially relevant for MTC applications where the payload size could be as small as a few hundred bits. A related random access framework for finite block length transmissions is  discussed in \cite{Durisi2015}, and a similar framework for downlink URLLC is studied in \cite{AnaVec2018}.

%%%%%%%%%%%%%%%%%%%%%%%%%%%%%%%%%%%%%%%%
\subsection{Overview and contributions}

The goal of this paper is to analyze a general framework for uplink random access communications to provide insights on resource allocation strategies relevant for future 5G networks. As described in the system model in Section \ref{RACHmodel}, the time and frequency resources are partitioned into slots whose granularity is optimized in order to maximize the throughput, given constraints on latency and reliability. In Section \ref{IBL} we develop general closed-form expressions for the reliability in terms of the probability of transmission failure under two power control scenarios, one in which all devices are received with exactly the same SNR ({\em constant SNR} scenario) and another in which average received power is the same but the realization is modulated by fading ({\em Rayleigh fading} scenario). These results use an approximation of the SINR threshold related to failure, assuming capacity-achieving encoding with infinite block length (IBL) transmissions. In Section \ref{results},  explicit expressions for the optimal slot granularity are derived by focusing on the regime of high and low $\snr$, and the scaling of the optimal throughput is characterized for each regime.  In Section \ref{FBL}, we exploit the scaling results for the IBL regime to characterize the optimal throughput for finite block length (FBL) transmissions, again under low and high $\snr$ conditions. We numerical results in Section \ref{Simulations} and show that the throughput scaling results are applicable for even moderate SNRs.

The key design insights for the proposed random access framework are as follows:
\begin{itemize}
\item For the constant $\snr$ scenario in the high $\snr$ regime, the resources should be split sufficiently such that devices do not experience any interference provided that the total number of resources is sufficiently large. Hence, the number of frequency bins $B$ should %approximately be the same as the number of device arrivals, i.e. $B\approx K$. 
scale with the number of device arrivals $K$.
\item For the constant $\snr$ scenario in the low $\snr$ regime, we show that the devices should share the resources. 
\item For the case of Rayleigh fading, although the variability of the channel causes a drop in the number of arrivals that can successfully complete the random access phase, we briefly discuss that similar conclusions extend to that case. 
\item FBL regime is the practical case for short packet sizes. Although there is a gap in its throughput %the throughput of the FBL model 
from the IBL model, scaling results for both regimes are similar.
\end{itemize}
 
% Compared to \cite{MaHuAnd2017ICC}, the current paper provides more detailed proofs and additional numerical results. 
Within the framework of a suboptimal single-user receiver, our analysis is an upper bound on the achievable throughput. This is due to the following additional assumptions: ideal negative acknowledgement with no error or delay, IBL or FBL capacity-achieving encoding, perfect power control, and perfect synchronization among users. Relaxing these assumptions could be studied in the future through a more general analysis or through simulations.  Nevertheless, the design insights could be applied to 5G cellular system design where delay-constrained communications will be an important use case.

%%%%%%%%%%%%%%%%%%%%%%%%%%%%%%%%%%%%%%%%%
\section{System Model and Assumptions}
\label{RACHmodel} 

We consider a wireless multiple access communication channel where a set of users transmit over $W$ Hz bandwidth of shared radio resources to a single receiver. Each user attempts to communicate a payload of $L$ bits within a latency constraint of $T$ seconds. For a given payload, a user can in general transmit multiple times within $T$ seconds according to a retransmission protocol we describe below. For each user, the attempts to transmit a payload are modeled as a Poisson process with a common mean arrival rate. If we further assume the devices transmit independently, then the aggregate transmissions of the users can be modeled as a Poisson process with mean arrival rate $\lambda$ given by the sum of the individual mean arrival rates. 

The time domain of the resources is partitioned into slots of equal duration, which we define below. During a given slot, a random number of new users enter the system and attempt to transmit, and  unsuccessful users from previous slots may also attempt to retransmit. The frequency domain is partitioned into $B$ bins of bandwidth $W/B$ Hz each. So on a given time slot, each user (new or returning) chooses one of the $B$ bins at random and sends a sequence of encoded symbols over the entire time slot and frequency bin. We refer to a given slot and bin as a time frequency slot (TFS).  Note that given $W$, $T$, $B$, and $M$, the number of symbols in a TFS is $WT/(BM)$. The coding rate is therefore $LBM/(WT)$ bits per transmitted symbol. 

%The channel between each user and the receiver is assumed to be Rayleigh fading and corrupted by additive Gaussian noise. For simplicity, we normalize the Rayleigh fading to have unit mean and let $h_j$ denote an exponential random variable with unit mean which is the square of the Rayleigh fading random variable for a given user $j$. Power control is employed so that the average received power from any user is the same. Therefore as a result of the Rayleigh fading, the signal-to-noise ratio, defined as the ratio of the user's received power to the noise power, is $\rho h_j$. 

Each user experiences Rayleigh fading with unit mean. Then for a given user $j$ the channel fading power $h_j$ follows an exponential distribution with mean $1$ (identical distribution for all $j$). Power control is employed so that the average received power at the base station from any user is the same. The noise is assumed to be additive Gaussian with constant power. Therefore as a result of the Rayleigh fading, the signal-to-noise ratio (SNR), defined as the ratio of the user's received power to the noise power, is $\rho h_j$. 

In general, a given TFS chosen by a user could be occupied by other users. Suppose a total of $K$ users have chosen a TFS and their squared channel amplitudes are $h_1,\ldots,h_K$. Then the signal-to-interference plus noise ratio (SINR) of a desired user's power, say with index 1, with respect to the others' is $\rho h_1 / (1 + \rho \sum_{k=2}^K h_j)$.  A given user's first transmission is successful if the SINR exceeds the required information theoretic threshold $\Gamma$. We derive $\Gamma$ later and see that it is different for the IBL and FBL cases.  If it is unsuccessful (i.e., the SINR is less than $\Gamma$), then a negative acknowledgement is sent to the transmitting user to indicate a failed transmission, and the user has an opportunity to attempt retransmission. In this paper, we make an ideal assumption that this negative acknowledgement is sent with zero delay and error such that the user can immediately attempt a retransmission on the following time slot over a randomly selected frequency bin. 

Chase combining is used at the receiver to combine signal energy for a given user's transmissions over multiple slots.  The transmission is successful if the Chase-combined SINR exceeds the threshold $\Gamma$. Once a user's transmission is successfully decoded, it stops transmitting. Otherwise, it will continue to transmit on successive slots with a randomly selected bin, for up to a total of $M$ slots. The duration of each slot is $T/M$ seconds, so that if a user's transmission is successful by the $M^{\rm th}$ transmission, it will have met the latency constraint $T$ seconds. 

The overall problem is to determine the maximum Poisson arrival rate $\lambda$ that can be supported by adjusting the parameters $B$ and $M$, given the resource constraints of bandwidth $W$Hz bandwidth and time $T$ seconds, and for a given the user target payload size $L$ bits and outage probability $\delta$:
\begin{equation}
\label{optimization1}
\begin{aligned}
\Lopt=&\underset{B, \,M\in\mathbb{Z}^+}{\max}
& & \lambda \\
& \hspace{0.6cm}\text{s.t.} 
& & \Pfail(\lambda, L, B, M) \leq \delta, \\
\end{aligned}
\end{equation}
where  $\Pfail(\lambda, L, B, m)$ is the probability of failure for a typical user attempting to transmit a given payload, up to and including the $m^{\rm th}$ transmission $(m=1,\ldots,M)$. (We have suppressed the dependence on $W$ and $T$.) Given that each slot is $T/M$ seconds, $\Pfail(\lambda, L, B, M)$ is the probability that a user cannot meet its latency constraint $T$ seconds. If we denote a typical user's Chase-combined SINR on the $m$ transmission attempt as ${\sf{SINR}}_m$ ($m= 1,\ldots,M$), then a failure (or outage) event occurs if the Chase-combined SINR falls below the threshold $\Gamma$ for every attempt, up to and including the $m^{\rm th}$. Hence we can write: 
\begin{align}
\label{Poutdefinition}
\Pfail(\lambda,L,B,m)=\mathbb{P}\left[{\sf{SINR}}_1< \Gamma, {\sf{SINR}}_2<\Gamma,\ldots,{\sf{SINR}}_m<\Gamma \right].
\end{align}
 In general, the Chase-combined SINR ${\sf{SINR}}_m$  is dependent on $\lambda$, $L$, $B$, $W$, $T$, and the channel realizations of users choosing a particular TFS, but we suppress them to simplify the notation. For the first transmission $m=1$, there is no Chase combining, so that for a desired user with index 1 we have
 \begin{equation}
  {\sf{SINR}}_1  = \frac{\rho h_1}{ 1 + \rho \sum_{k=2}^K h_j},
 \end{equation}
as described above. The derivation the Chase-combined SINR for $m>1$ is given in Section \ref{IBL}. We note that the aggregate arrival rate for a given slot is the sum of the arrival rate of new users $\lambda$ and the arrival rate of users who are retransmitting. The retransmission rate is dependent on the failure probability, and we derive the aggregate arrival rate in Section \ref{IBL}.

As an example, consider the system model for $M=5$ retransmissions as shown in Fig. \ref{Mgeneral}. A device arriving during slot $1$ has until slot $5$ to transmit. Another device arriving during slot $2$ has until slot $6$ to transmit. This emphasizes the fact that devices can arrive during each slot. In the same figure, we also illustrate a typical device that uses only $m=3$ time slots out of $5$ to retransmit its payload to the BS by choosing frequency bins at random at each time slot. There are $k_i$ devices contending in frequency-time slot $i$. The first $2$ attempts combined at the BS cannot be decoded and the BS sends NACK back to the device after each failure. The payload is successfully decoded after $2$ retransmission attempts and the BS sends an ACK to the device. 

\begin{figure}[t!]
\center
\includegraphics[width=\columnwidth]{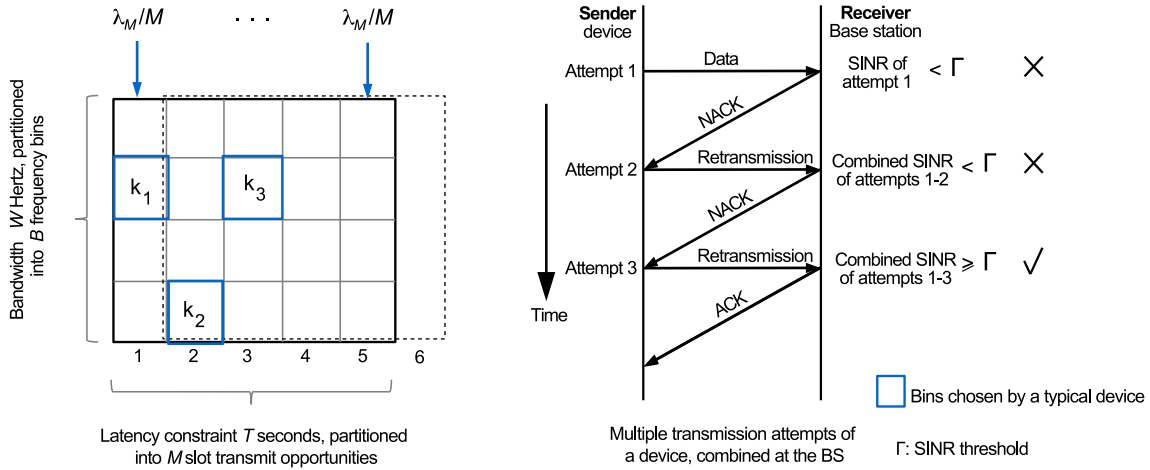} 
\caption{\small{An example RA scenario. The bandwidth is partitioned into $B (=4)$ frequency bins and time is partitioned into $M (=5)$ transmit opportunities. The average arrival rate of devices for each slot is $\lambda_M/M$. We illustrate a typical device that is unsuccessful on the first transmission attempt, and retransmits $2$ times by choosing frequency bins at random at each time slot. The transmission attempts are combined at the BS, yielding success in slot $m=3$.}} 
\label{Mgeneral}
\end{figure}

%%%%%%%%%%%%%%%%%%%%%%%%%%%%%%%%%%%%%%%
\section{Infinite Block Length}
\label{IBL}
Shannon's channel capacity is achievable at an arbitrarily low error rate when coding is performed in the infinite block length (IBL) regime, i.e., using a code block of infinite length. However, since the number of resources $N$ is finite, the ratio $L/N$ is always finite. Hence, given $L$, the IBL scheme gives an upper bound on the achievable rate, and a lower bound on $n$.

In this section, we optimistically assume that the encoded block length of $n$ symbols is sufficiently large that we can exploit the IBL\footnote{In Sect. \ref{FBL}, we focus on the finite block length (FBL) regime, where we no longer have the large block length assumption.} limit to characterize the system performance. The capacity of an additive white Gaussian noise (AWGN) channel where interference is treated as noise in that case is
\begin{align}
\label{shannoncap}
C(\SINR)=\log_2(1+\SINR). 
\end{align}

Denote by $\SINRm$ the Chase combiner output $\sinr$ up to and including the $m^{\rm th}$ attempt of the typical device. A device fails on the $m^{\rm th}$ attempt if $\SINRm$ is below the threshold $\Gamma$. Given a target $\sinr$ outage rate $\delta$ per device, outage occurs if the received $\sinr$, which is a function of the channel realizations and the number of devices sharing the same TFS resources up to the current realization, is below $\Gamma$. At any time slot $m\in\mathcal{M}=\{1,\hdots, M\}$, for successful decoding of a typical device in the uplink, the block length $n$ should be chosen sufficiently large so that the transmit rate $L/n$ is below the channel capacity:   
\begin{align}
\label{CapacityConstraintGeneralModel}
\frac{L}{n}\leq C(\SINRm),\quad m\in\mathcal{M},\quad n\leq \frac{TW}{MB}.
\end{align}
Note that when $M$ or $B$ are large, $n$ may not be made arbitrarily large. However, when $\lambda$ is high, the capacity constraint is binding and $n$ should be made sufficiently large, which will become more explicit in Sect. \ref{IBL}. Hence, the possible values of $n$ are jointly determined by $B$, $M$ and $\lambda_M$. We need to make sure that the capacity constraint (\ref{CapacityConstraintGeneralModel}) is satisfied by a proper choice of $n$.

Given a block length $n$, from (\ref{shannoncap}) and (\ref{CapacityConstraintGeneralModel}), the $\sinr$ threshold for the IBL regime is given as 
\begin{align}
\Gamma=2^{\frac{L}{n}}-1. 
\end{align}
Combining the capacity constraint in (\ref{CapacityConstraintGeneralModel}) with the Chase combiner output $\SINRm$, we investigate the probability of outage for both constant $\snr$ and Rayleigh fading. 

Given the number of frequency bins $B$, the number of arrivals per bin per time frame of duration $T$, is Poisson distributed with an average arrival rate of  $\lambda_B=\lambda/B$. Hence, the probability of $k\geq 1$ devices choosing a given resource bin out of $B$ bins for transmission, given that there is at least one arrival, is given by the following conditional probability: 
\begin{align}
\label{Dfunction}
D(k,\lambda_B)=\frac{\mathbb{P}(k \,\,\text{device arrivals per bin},\,\, k\geq 1)}{\mathbb{P}(k\geq 1)}=\frac{\lambda_B^k\exp(-\lambda_B)}{k!(1-\exp(-\lambda_B))},\quad k\geq 1,
\end{align}
where we note that the probability of at least one device arrival is $\mathbb{P}(k\geq 1)=1-\exp(-\lambda_B)$. 

Denote the average aggregate device arrival rate with up to $M$ total transmissions per frame by $\lambda_M$, which is the sum of the rates of the original arrivals per slot, $\lambda$, and the arrivals occurring as a result of failed retransmissions up to a maximum of $M-1$ consecutive attempts. It equals
\begin{align}
\label{lambdaMvslambda}
\lambda_M=\lambda\left[1+\sum\limits_{m=1}^{M-1}\Pfail(\lambda,L,B,m) \right],
\end{align}
where $\Pfail(\lambda,L,B,m)$ as given in (\ref{Poutdefinition}) is the fraction of devices that fail up to and including the $m^{\rm th}$ attempt. Hence, given a maximum number of transmission attempts $M$ per frame, the number of aggregate arrivals at any time slot $m$ is Poisson distributed\footnote{For tractability, we inherently have the Poisson distribution assumption for the composite arrival process. From \cite{Dhillon2014tcomm} and \cite{Abramson1970}, this assumption is justifiable when the number of retransmissions is not too large.} with an average rate of $\frac{\lambda_M}{M}$. 

For ease of notation, given $M$, denote by $k_m\geq 1$, $m\in \mathcal{M}$ the realization of the number of aggregate arrivals on the $m^{\rm th}$ attempt with respect to the typical device, i.e., the sum of the number of new arrivals during transmission $m$ and the number of retransmissions due to the previously failed attempts in TFS $m$. Denote by $\mathcal{K}_m\coloneqq \{k_1,\hdots, k_m\}$ the realizations for the number of aggregate arrivals up to and including $m^{\rm th}$ attempt. Hence, the set $\mathcal{K}_m$ consists of the realizations of a Poisson random variable with parameter $\frac{\lambda_M}{BM}$, due to (\ref{lambdaMvslambda}). We also denote by $\zeta_m$ as a function of $\mathcal{K}_m$, a realization of random variable $\SINRm$ whose distribution depends on the history of the arrivals.

%%%%%%%%%%%%%%%%%%%%%%%%%%%%%%%%%%%%%%%%%
\subsection{Constant $\snr$}\label{ConstantSNR}
At constant $\snr$, let $k_m$ arrivals choose a given frequency bin, each having $\snr$ $\rho$, during transmission $m\in \mathcal{M}$. Given the Chase combiner output $\sinr$ as a result of $m\in \mathcal{M}$ transmissions, i.e. $\sinrm$ as a function of $\mathcal{K}_m$, and incorporating (\ref{shannoncap}) in the capacity constraint in (\ref{CapacityConstraintGeneralModel}), the maximum number of aggregate arrivals that can be supported at time slot $m\in\mathcal{M}$ can be determined as a function of $\mathcal{K}_{m-1}$. We denote this maximum by $\kopt_m\coloneqq k_m(\mathcal{K}_{m-1})$.  

We next give the Chase combiner output as a result of $M$ total transmissions at constant $\snr$.
\begin{prop}
\label{chase}
{\bf Chase combiner output $\snr$.}
If $M>1$, a device is allowed to retransmit if the preceding one fails, for a total of $M$ transmissions. Given that there are $K_i\geq 1$ devices transmitting in TFS $i$, each having $\snr$ $\rho$, then the $\snr$ at the output of the Chase combiner by treating interference as noise (TIN) is
\begin{align}
\label{chaseSINR}
\SINRm=\frac{\rho m^2}{m+\rho \big(\sum\limits_{i=1}^m K_i-m\big)}, \quad m\in\mathcal{M}.   
\end{align}
\end{prop}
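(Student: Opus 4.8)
The plan is to derive (\ref{chaseSINR}) directly from the equal-gain combining of the per-slot received signals, treating interference as noise. First I would fix the typical desired device (index $1$) and write its baseband received signal in TFS $i$ as $y_i=\sqrt{\rho}\,x+I_i+n_i$, where $x$ is the repeated coded symbol, $n_i$ is unit-variance AWGN, and $I_i$ aggregates the $K_i-1$ interferers sharing bin $i$. Under the constant $\snr$ assumption with perfect power control, every co-scheduled device is received at the same $\snr$ $\rho$, so the desired-signal power is $\rho$, the interference power in slot $i$ is $\rho(K_i-1)$, and the interference-plus-noise power of the $i^{\rm th}$ branch is $N_i=1+\rho(K_i-1)$. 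Specializing to $m=1$ this reproduces $\SINRone=\rho/(1+\rho(K_1-1))$, which is the first-transmission expression of the system model with $h_1=1$, and it fixes the notation for the combining step.

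Next I would form the Chase-combined statistic by summing the $m$ received branches with equal weights; equal-gain combining is the natural choice here because, in a random-access setting, the receiver cannot measure the instantaneous per-slot interference level and so MRC weights are unavailable. The key observation is an asymmetry between the signal and the interference-plus-noise. Since the same symbol is retransmitted and the receiver co-phases each branch (equivalently, the phase is absorbed into power control), the desired component adds \emph{coherently}, so its amplitude scales as $m\sqrt{\rho}$ and its power as $m^2\rho$. By contrast, the interferers occupying the randomly chosen bins in distinct slots form statistically independent device subsets with independent symbols, and the noise terms $n_i$ are independent across slots; hence the combined interference-plus-noise adds \emph{incoherently} and its power is the sum $\sum_{i=1}^m N_i$.

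I would then simplify the denominator as $\sum_{i=1}^m N_i=\sum_{i=1}^m\bigl(1+\rho(K_i-1)\bigr)=m+\rho\bigl(\sum_{i=1}^m K_i-m\bigr)$ and take the ratio of the coherent signal power to the incoherent interference-plus-noise power, giving $\SINRm=\rho m^2/\bigl(m+\rho(\sum_{i=1}^m K_i-m)\bigr)$, which is exactly (\ref{chaseSINR}). As sanity checks I would verify the two limiting behaviours: for $m=1$ it collapses to $\rho/(1+\rho(K_1-1))$, and when all $K_i$ equal a common value $K$ it reduces to $\rho m/(1+\rho(K-1))$, i.e. the expected linear-in-$m$ combining gain for a fixed per-slot load.

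I expect the main obstacle to be the modeling justification rather than the algebra: arguing precisely why the desired signal combines coherently (the $m^2$ factor) while the interference-plus-noise combines incoherently (the linear sum $\sum_i N_i$). This rests on two ingredients, namely the constant-$\snr$/power-control assumption that equalizes all received powers and permits co-phasing of the repeated desired symbol, and the independence of the interferer sets and of the channel and noise realizations across the independently selected bins in different slots. Once these two points are made explicit, (\ref{chaseSINR}) follows immediately from the ratio of the two combined powers.
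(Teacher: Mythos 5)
Your proposal is correct and follows essentially the same route as the paper: the paper also combines the $m$ branches (with matched-filter weights $a_{il}^*$, which under the constant-$\snr$ assumption have equal magnitude and thus reduce to your co-phased equal-gain combiner), obtaining coherent signal power $\big(\sum_{i=1}^m |a_{il}|^2\big)^2 \propto m^2\rho$ and incoherent interference-plus-noise power $\sum_{i=1}^m\bigl(1+\rho(K_i-1)\bigr)$ by TIN, exactly as you do. Your explicit justification of the coherent/incoherent asymmetry and your limiting-case sanity checks are consistent with, and slightly more detailed than, the paper's argument.
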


\begin{proof}
See Appendix \ref{App:Appendix-chase}.
\end{proof}

For a given maximum number of retransmission attempts $M$, an outage occurs if there are more than $\kopt_m$ devices sharing the same bin on attempt $m\in\mathcal{M}$, due to limited number of symbol resources $N/MB$. Using (\ref{Dfunction}), the probability of outage at slot $m\in\mathcal{M}$ is calculated adding up the probabilities $D(k_m,\lambda_B)$ over the number of arrivals that cannot be supported. Given $B$ and $N$, Prop. \ref{OutageGeneralM-nofading} gives the probability of outage at constant $\snr$ up to a maximum of $M$ transmissions.

\begin{prop}
\label{OutageGeneralM-nofading}
Given $B$, $M$, symbol resources $N$, the probability of outage in the IBL regime in the case of constant $\snr$ is given by the following expression:
\begin{align}
\label{PoutGeneralShannon}
\PfailIBL(\lambda,L,B,M)
=\sum\limits_{m=1}^M\,\,\sum\limits_{k_m=\kopt_m(\mathcal{K}_{m-1})+1}^{\infty}\,\,\prod\limits_{i=1}^M D\Big(k_i,\frac{\lambda_M}{B}\Big),
\end{align}
where the lower limit of the summation in (\ref{PoutGeneralShannon}) is
\begin{align}
\label{jmaxgeneralM}
\kopt_m(\mathcal{K}_{m-1})&=\max\Big\{\Big \lfloor m+\frac{m^2}{\Gamma}-\frac{m}{\rho}-\sum\limits_{i=1}^{m-1}k_i  \Big\rfloor, 1 \Big\},\,\,\, \Gamma=2^{\frac{L}{n}}-1,
 \,\,\, n\leq \frac{N}{BM}, \,\,\, m\in \mathcal{M}.
\end{align}
The aggregate arrival rate $\lambda_M$ with up to $M$ total transmissions is given by (\ref{lambdaMvslambda}).
\end{prop}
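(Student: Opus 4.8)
The plan is to read $\PfailIBL(\lambda,L,B,M)$ as the probability of the intersection $\bigcap_{m=1}^{M}\{\SINRm<\Gamma\}$ from (\ref{Poutdefinition}) and to evaluate it in three moves: (i) convert each event $\{\SINRm<\Gamma\}$ into an inequality on the arrival counts using Proposition \ref{chase}; (ii) factorize the joint law of the per-slot occupancies by independence; and (iii) enumerate the surviving configurations as an iterated sum. A device is in outage exactly when its Chase-combined $\sinr$ stays below $\Gamma$ on every one of its $M$ attempts, so I would note up front that no monotonicity of $\SINRm$ in $m$ is required: I only need to enumerate all arrival histories for which every slot fails.

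First I would substitute the Chase-combiner output (\ref{chaseSINR}) into $\SINRm<\Gamma$. Because each $K_i\geq 1$ (the typical device itself occupies its chosen bin), the denominator $m+\rho\big(\sum_{i=1}^{m}K_i-m\big)$ is strictly positive, so I can cross-multiply without flipping the inequality and rearrange $\frac{\rho m^2}{m+\rho(\sum_{i=1}^{m}k_i-m)}<\Gamma$ into the linear form $\sum_{i=1}^{m}k_i> m+\frac{m^2}{\Gamma}-\frac{m}{\rho}$. Isolating the current slot gives $k_m> m+\frac{m^2}{\Gamma}-\frac{m}{\rho}-\sum_{i=1}^{m-1}k_i$, which by integrality of $k_m$ means $k_m\geq\lfloor\,\cdot\,\rfloor+1$. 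Matching this to the summation lower limit $\kopt_m(\mathcal{K}_{m-1})+1$ of (\ref{jmaxgeneralM}) then only requires the convention $\kopt_m\geq 1$, i.e.\ that a bin can always carry at least the typical device; I would introduce this $\max\{\cdot,1\}$ as a definition rather than try to derive it from the $\sinr$ expression.

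Next I would assemble the probability. Under the Poisson assumption for the composite arrival stream and the independent random bin selection across slots, the occupancies $K_1,\dots,K_M$ are independent, each distributed -- given that it is at least one, since the typical device is present -- according to the law $D\big(\cdot,\tfrac{\lambda_M}{B}\big)$ of (\ref{Dfunction}), with $\lambda_M$ fixed self-consistently by (\ref{lambdaMvslambda}). Hence any fixed history $(k_1,\dots,k_M)$ has probability $\prod_{i=1}^{M}D\big(k_i,\tfrac{\lambda_M}{B}\big)$, and summing this product over precisely the histories that fail at every slot -- each $k_m$ running from $\kopt_m(\mathcal{K}_{m-1})+1$ to $\infty$ -- yields (\ref{PoutGeneralShannon}). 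I would make explicit that the displayed $\sum_{m=1}^{M}$ is to be read as the iterated summation $\prod_{m=1}^{M}\big(\sum_{k_m=\kopt_m(\mathcal{K}_{m-1})+1}^{\infty}\big)$, which is consistent with the single product $\prod_{i=1}^{M}D(k_i,\cdot)$ sitting inside.

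The hard part will be the bookkeeping forced by the history dependence of the thresholds: since $\kopt_m$ depends on $\mathcal{K}_{m-1}$, the $M$ per-slot failure events are correlated and the sums do not collapse into a product of marginal slot-outage probabilities. The point I would stress is that this correlation lives entirely in the summation limits, whereas the summand factorizes by independence; this is exactly what lets the genuine intersection be written as a nested sum of a product. I would also be careful that each inner constraint $k_m\geq\kopt_m(\mathcal{K}_{m-1})+1$ is evaluated against the same partial history $k_1,\dots,k_{m-1}$ that the outer summations have already fixed, so that the enumeration matches $\bigcap_{m=1}^{M}\{\SINRm<\Gamma\}$ term by term.
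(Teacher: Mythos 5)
Your proposal is correct and follows essentially the same route as the paper: invoke the homogeneous-Poisson approximation for the composite (new plus retransmitting) arrivals so that the per-slot occupancies are i.i.d.\ with conditional law $D\big(\cdot,\tfrac{\lambda_M}{B}\big)$, convert each failure event $\{\SINRm<\Gamma\}$ via the Chase-combiner output (\ref{chaseSINR}) into the cumulative-count inequality $\sum_{i=1}^{m}k_i > m+\tfrac{m^2}{\Gamma}-\tfrac{m}{\rho}$ that fixes the history-dependent lower limits (\ref{jmaxgeneralM}), and enumerate the failing histories as a nested sum of the factorized product --- exactly the structure of the paper's appendix, where your reading of $\sum_{m=1}^{M}$ as an iterated summation matches the explicit $\sum_{k_1}\sum_{k_2}\cdots\sum_{k_M}$ written there. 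Your explicit handling of the $\max\{\cdot,1\}$ convention and of where the inter-slot correlation resides (in the limits, not the summand) also mirrors, and slightly sharpens, the paper's terser presentation.
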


\begin{proof}
See Appendix \ref{App:Appendix-OutageGeneralM-nofading}.
\end{proof}

For successful transmission, given an $\sinr$ threshold $\Gamma$, we require that $\frac{L}{n}\leq C(\sinrm)$, $m\in \mathcal{M}$. Using (\ref{chaseSINR}), in order to satisfy the capacity constraint, we need to have $\sum\nolimits_{i=1}^m k_i \leq \big \lfloor m+\frac{m^2}{\Gamma}-\frac{m}{\rho}\big\rfloor,\, m\in \mathcal{M}$, yielding the lower limit of the summation in (\ref{PoutGeneralShannon}), as given by (\ref{jmaxgeneralM}).

%%%%%%%%%%%%%%%%%%%%%%%%%%%%%%%%%%%%%%%%
\subsection{Small Scale (Rayleigh) Fading}\label{RayFad}
For the case of small scale Rayleigh fading with mean $1/\mu$, let $K_i$ arrivals choose a given frequency bin, each having $\snr$ $\rho$ at retransmission attempt $i$. Denote by $h_{l_i} \sim \exp(\mu)$ for $l\in\{1,\hdots, K_i\}$, $i\in\mathcal{M}$, the independent and identically distributed (i.i.d.) channel power distributions, where the desired device has index $l=1$, and $l\in\{2,\hdots, K_i\}$ is the interferer index. Let the total interference seen at transmission attempt $i$ be $I_{K_i}=\rho\sum\nolimits_{l=2}^{K_i}{h_{l_i}}$. Incorporating the channel power distributions, the Chase combiner output $\sinr$ as a result of $m\in\mathcal{M}$ transmissions, which is proven in Appendix \ref{App:Appendix-chase}, is 
\begin{align}
\label{SINRdefnRayleighFading}
\SINRm =\frac{\rho \big(\sum\limits_{i=1}^m h_{1_i}\big)^2}{\sum\limits_{i=1}^m h_{1_i}\big(1+I_{K_i}\big)}\overset{(a)}{=}\frac{m^2\rho h_{1_m}}{m+ \sum\limits_{i=1}^m I_{K_i}},\quad m\in\mathcal{M},
\end{align}
where $(a)$ is based on the assumption that $h_{1_m}$ is unchanged across $M$ time slots. This is consistent with the Rayleigh block fading model \cite{Zhong2016} in which the power fading coefficients remain static over each time slot, and are temporally (and spatially) independent. 

The probability of outage in the case of small scale Rayleigh fading is derived next.
\begin{prop}\label{ShannonOutageFadingGeneralM}
In the case of Rayleigh fading with mean $1/\mu$, the probability of outage as a function of $B, M, L, \rho, N$, and $\sinr$ threshold $\Gamma$ is characterized as
\begin{eqnarray}
\label{GeneralMFadingOutage}
\PfailIBL(\lambda,L,B,M)=\sum\limits_{m=1}^M\,\,\sum\limits_{l_m\in\{l_{m-1},\,l_{m-1}+1\}}(-1)^{l_M}\,\,\prod\limits_{i=1}^M f\Big(\mu,\frac{\lambda_M}{B},l_i\Gamma\Big),
\end{eqnarray}
where $l_0=0$ and the term $f(\mu,\alpha,\Gamma)$ is expressed as
\begin{align}
\label{f-function}
f(\mu,\alpha,\Gamma)=e^{-\mu \Gamma\rho^{-1}}(\Gamma+1)\frac{e^{\alpha/(\Gamma+1)}-1}{e^{\alpha}-1},
\end{align}
and the relation between $\Lopt$ and $\lambda_M$ is given by (\ref{lambdaMvslambda}).
\end{prop}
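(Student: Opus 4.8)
The plan is to build (\ref{GeneralMFadingOutage}) from two ingredients: a closed form for the single-slot success probability, and an inclusion--exclusion expansion of the all-fail event that I will re-index as a sum over monotone lattice paths.

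First I would establish that the function $f$ in (\ref{f-function}) is exactly the single-slot success probability at threshold $\gamma$, i.e.\ $f(\mu,\alpha,\gamma)=\mathbb{P}[\rho h_1/(1+I)\ge\gamma]$, where $I=\rho\sum_{l\ge 2}h_l$ is the interference produced by the other users sharing the bin. Conditioning on $I$ and using the complementary CDF of the $\exp(\mu)$ gain $h_1$ gives the factor $e^{-\mu\gamma\rho^{-1}}\,\mathbb{E}[e^{-\mu\gamma I/\rho}]$; evaluating the Laplace transform of the compound interference over the Poisson($\alpha$)-conditioned-on-$\{k\ge1\}$ number of interferers from (\ref{Dfunction}), with i.i.d.\ exponential powers, collapses to a geometric-type sum producing the ratio $(\gamma+1)\,(e^{\alpha/(\gamma+1)}-1)/(e^{\alpha}-1)$. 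Setting $\gamma=l\Gamma$ and $\alpha=\lambda_M/B$ recovers $f(\mu,\lambda_M/B,l\Gamma)$. This step is routine.

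Next I would expand the outage event. By (\ref{Poutdefinition}) and Prop.\ \ref{chase}, $\Pfail(\lambda,L,B,M)=\mathbb{P}[\bigcap_{m=1}^M\{\SINRm<\Gamma\}]$, so writing each event as the complement of a success event $F_m=\{\SINRm\ge\Gamma\}$ and applying inclusion--exclusion gives $\Pfail=\sum_{S\subseteq\mathcal{M}}(-1)^{|S|}\,\mathbb{P}[\bigcap_{m\in S}F_m]$. I would then encode each subset $S$ as the monotone path $l_0=0$, $l_m=l_{m-1}+\mathbb{1}[m\in S]$, so that $l_m-l_{m-1}\in\{0,1\}$, $|S|=l_M$, and the inclusion--exclusion sign becomes $(-1)^{l_M}$; this is precisely the nested summation $\sum_{m=1}^M\sum_{l_m\in\{l_{m-1},l_{m-1}+1\}}$ appearing in (\ref{GeneralMFadingOutage}).

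The crux is to show the joint success probability factorizes as $\mathbb{P}[\bigcap_{m\in S}F_m]=\prod_{i=1}^M f(\mu,\lambda_M/B,l_i\Gamma)$. Here I would condition on the common desired-link gain $h_1$ (held fixed across the frame by assumption $(a)$ below (\ref{SINRdefnRayleighFading})) and exploit the temporal independence of the per-slot interferences $I_{K_i}$, which arise from fresh random bin choices and fresh interferer sets each slot. The Chase-combined constraint $\SINRm\ge\Gamma$ from (\ref{chaseSINR}), imposed on the $l_i$ active slots up to index $i$, reduces slot by slot to a single per-slot threshold scaled by the running count $l_i$, so slot $i$ contributes the single-slot success factor $f(\mu,\lambda_M/B,l_i\Gamma)$; integrating out $h_1$ by memorylessness completes the product. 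I expect this factorization to be the main obstacle: the events $F_m$ are coupled both through the shared $h_1$ and through the cumulative interference sums $\sum_{i\le m}I_{K_i}$, so turning the joint event into a clean product with the accumulated thresholds $l_i\Gamma$ is exactly where the block-fading and independent-retransmission structure (consistent with the Poisson composite-arrival assumption already invoked for $\lambda_M$) must be used carefully. Finally I would substitute $\alpha=\lambda_M/B$, close the recursion between $\Lopt$ and $\lambda_M$ via (\ref{lambdaMvslambda}), and verify the $M=1,2$ cases as a consistency check against (\ref{GeneralMFadingOutage}).
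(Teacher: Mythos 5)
Your two supporting ingredients are correct and coincide with the paper's own proof: the identification of $f(\mu,\alpha,\gamma)$ in (\ref{f-function}) as the single-slot success probability (exponential ccdf combined with the averaged interference Laplace transform (\ref{averageLTofICI}) evaluated at $s=\mu\gamma\rho^{-1}$), and the monotone-path re-indexing of an alternating sum with sign $(-1)^{l_M}$, which is exactly step $(c)$ of Appendix \ref{App:Appendix-ShannonOutageFadingM}. The genuine gap is the step you yourself call the crux, and it is twofold. First, the per-subset identity you set out to prove, $\mathbb{P}[\bigcap_{m\in S}F_m]=\prod_{i=1}^{M}f(\mu,\lambda_M/B,l_i\Gamma)$ with $l_i=|S\cap\{1,\hdots,i\}|$, is false as stated: take $M=2$ and $S=\{1\}$, so $(l_1,l_2)=(1,1)$; the left side is the single-slot success probability $f(\mu,\lambda_M/B,\Gamma)$ (slot $2$ imposes no constraint, and $\SINRone$ involves no combining), while your right side is $f(\mu,\lambda_M/B,\Gamma)^2$. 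The path-indexed sum (\ref{GeneralMFadingOutage}) agrees with the inclusion--exclusion sum only in aggregate: the value your encoding attaches to $S$ is actually the term belonging to the reflected subset $\{M-m+1:m\in S\}$, and the two sums coincide because the $I_{K_i}$ are i.i.d.\ across slots, so no subset-by-subset factorization lemma can be true. Second, the mechanism you propose cannot produce a product at all: conditioning on a common gain $h_1$ and on the interference history, the joint success event is $\{h_1\geq\max_{m\in S}t_m\}$ with $t_m=\frac{\Gamma}{\rho m^2}\big(m+\sum_{i\leq m}I_{K_i}\big)$, whose probability $e^{-\mu\max_{m\in S}t_m}$ is governed by a single largest threshold; memorylessness converts an intersection into a product only when the constraints nest as running sums of increments, and the Chase constraints (\ref{chaseSINR}) have no such structure.

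What the paper does at this point is to impose, rather than derive, the needed independence: in step $(b)$ of Appendix \ref{App:Appendix-ShannonOutageFadingM}, the conditional all-fail probability given $I_{K_1},\hdots,I_{K_M}$ is written as $\prod_{m=1}^{M}\big(1-e^{-\mu t_m}\big)$, i.e.\ the desired-link fading is treated as drawn independently in each slot --- in tension with the ``unchanged $h_{1_m}$'' assumption used to obtain (\ref{SINRdefnRayleighFading}), but that is the operative approximation. This product is then expanded, expectations pass through the i.i.d.\ $I_{K_i}$'s via $\mathcal{L}_{I_k}(s)$ averaged over (\ref{Dfunction}), and the escalating arguments $l_i\Gamma$ emerge when the cumulative-interference exponents of each term are collected and re-indexed using exchangeability of the slots (note also that between step $(b)$ and the displayed expansion the factor $m^{-2}$ in $t_m$ is silently dropped; the expansion corresponds to thresholds $\frac{\Gamma}{\rho}\big(m+\sum_{i\leq m}I_{K_i}\big)$, which is what makes the answer come out as products of $f$ at arguments $l_i\Gamma$). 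To repair your proof, replace the shared-$h_1$/memorylessness argument by this slot-wise conditional independence assumption, expand the all-fail product directly instead of asserting a per-subset factorization, and then your Laplace-transform computation of $f$ and your path re-indexing do deliver (\ref{GeneralMFadingOutage}).
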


\begin{proof} 
See Appendix \ref{App:Appendix-ShannonOutageFadingM}.
\end{proof}

\begin{cor}
For the case of no retransmissions, the probability of outage can be derived as
\begin{align}
\label{M1FadingOutage}
\PfailIBL(\lambda,L,B,1)=1-f\Big(\mu,\frac{\lambda}{B},\Gamma\Big).
\end{align}
\end{cor}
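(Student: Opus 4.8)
The plan is to derive this as a direct specialization of Proposition~\ref{ShannonOutageFadingGeneralM} to $M=1$, avoiding any fresh outage computation. First I would note that with no retransmissions the sum in (\ref{lambdaMvslambda}) is empty, so $\lambda_M=\lambda$ and every occurrence of $\lambda_M/B$ in (\ref{GeneralMFadingOutage}) becomes $\lambda/B$. Next I would collapse the indices in (\ref{GeneralMFadingOutage}): the outer sum keeps only $m=1$, the product $\prod_{i=1}^{M}$ reduces to the single factor $f(\mu,\lambda/B,l_1\Gamma)$, and since $l_0=0$ the inner sum runs over $l_1\in\{0,1\}$ with sign $(-1)^{l_M}=(-1)^{l_1}$. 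This leaves exactly the two terms $f(\mu,\lambda/B,0)-f(\mu,\lambda/B,\Gamma)$.

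The only substantive step is evaluating the $l_1=0$ term. Setting $\Gamma=0$ in (\ref{f-function}) gives $f(\mu,\alpha,0)=e^{0}(0+1)\,\frac{e^{\alpha/(0+1)}-1}{e^{\alpha}-1}=1$, where the point is that the numerator $e^{\alpha/(\Gamma+1)}-1$ reduces to $e^{\alpha}-1$ and cancels the denominator exactly, so the apparent singularity at $e^{\alpha}-1$ is harmless. Substituting back yields $\PfailIBL(\lambda,L,B,1)=1-f(\mu,\lambda/B,\Gamma)$, as claimed.

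As an independent check that also exposes the meaning of $f$, I would sketch the direct argument. For $M=1$ there is no Chase combining, so $\SINRone=\rho h_1/(1+I_K)$ with $I_K=\rho\sum_{l=2}^{K}h_l$; conditioning on $K$ and $I_K$, success needs $h_1\geq\Gamma(1+I_K)/\rho$, which for $h_1\sim\exp(\mu)$ has probability $e^{-\mu\Gamma/\rho}e^{-\mu\Gamma I_K/\rho}$. Averaging over the $K-1$ i.i.d.\ exponential interferers through the Laplace transform $\mathbb{E}[e^{-sI_K}]=\left(\mu/(\mu+s\rho)\right)^{K-1}$ at $s=\mu\Gamma/\rho$ contributes the factor $(1+\Gamma)^{-(K-1)}$, and averaging over $K$ against the conditional Poisson law $D(k,\lambda/B)$ in (\ref{Dfunction}) gives a series that sums in closed form to exactly $f(\mu,\lambda/B,\Gamma)$, the success probability; its complement is the asserted failure probability.

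Since this is an immediate corollary, there is no genuine obstacle beyond the index bookkeeping in (\ref{GeneralMFadingOutage}); the one point that warrants care is confirming that $f$ is well-defined as $\Gamma\to 0$ and equals $1$, which I would establish by the exact cancellation above rather than by a limiting argument.
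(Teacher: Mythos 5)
Your proof is correct and takes essentially the same approach as the paper: the corollary is precisely the $M=1$ specialization of Proposition~\ref{ShannonOutageFadingGeneralM}, with $\lambda_1=\lambda$ from the empty sum in (\ref{lambdaMvslambda}) and the $l_1=0$ term collapsing to $f\big(\mu,\lambda/B,0\big)=1$ by the exact cancellation you note. Your ``independent check'' via the exponential tail and the Laplace transform of the interference is not a different route either --- it retraces the paper's own proof of that proposition (Appendix~\ref{App:Appendix-ShannonOutageFadingM}) restricted to $M=1$.
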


We next investigate throughput scaling laws for different $\snr$ regimes exploiting the probability of outage given in (\ref{PoutGeneralShannon}). We only focus on the constant $\snr$ case, for which the analytical derivations are more tractable than the Rayleigh fading case. However, we do not present the technical details of the Rayleigh fading results since they complicate the analysis without providing any additional insights. We discussed in detail how to derive scaling results for the Rayleigh fading case exploiting the constant $\snr$ results in \cite{MaHuAnd2017ICC}. In Sect. \ref{Simulations}, we illustrate scaling results both for the constant $\snr$ and Rayleigh fading cases and provide numerical comparisons.

%%%%%%%%%%%%%%%%%%%%%%%%%%%%%%%%%%%%%%%%%%%%%%%%%%%%%%%%%%%%%%%%%%%%%%%%%%%%%%%%%%
\section{Optimal Designs for High and Low $\snr$}
\label{results}
The goal of this section is to determine what are the optimal choices of the number of bins $B$, i.e., $\Bopt$,  and the number of retransmissions $M$, i.e., $\Mopt$, and the corresponding maximum average throughput %allowed arrival rate
$\Lopt$ for a fixed deadline constraint $T$. For the proposed random access setting, we solve the throughput optimization problem in (\ref{optimization1}) to maximize $\lambda$ with respect to the deadline constraint $T$ for high and low $\snr$ regimes, since this is not tractable for general $\snr$ settings. The total number of resources $N=TW$ is evenly split into $M$ retransmissions and $B$ bins. 
 
We investigate how $\Bopt$, $\Mopt$ and $\Lopt$ scale with the number of resource symbols $N$ at high and low $\snr$. Resource allocation is significantly different for these regimes. At one extreme, the combined $\sinr$ at the BS can be made high by splitting the resources into many resource bins and at the same time satisfying the outage and capacity constraints in (\ref{optimization1}). At another extreme, however, the combined $\sinr$ will be very low no matter how the resources are split, and the resources have to be shared in order not to violate the constraints of (\ref{optimization1}).

%%%%%%%%%%%%%%%%%%%%%%%%%%%%%%%%%%%%%%%
At constant $\snr$, the Chase combiner output $\sinr$ as a result of $m\in\mathcal{M}$ transmissions, i.e. $\sinrm$ as a function of $\mathcal{K}_m$, is derived from (\ref{chaseSINR}). Combining the capacity constraint in (\ref{optimization1}) with the Chase combiner output (\ref{chaseSINR}), and by incorporating (\ref{shannoncap}), we have
% for a typical device to successfully transmit $L$ bits:
\begin{align}
\label{capacityconstraint}
\log_2(1+\sinrm)\geq \frac{L}{n} \geq \frac{LMB}{N},\quad m\in\mathcal{M},
\end{align}
which implies that $\sinrm\geq \Gamma=2^{\frac{L}{n}}-1$. This yields the following relation, 
\begin{align}
\label{GammaRhorelation}
m+\frac{m^2}{\Gamma}-\frac{m}{\rho}\geq \sum\limits_{i=1}^m k_i.
\end{align}
The maximum number of aggregate arrivals that can be supported in TFS $m\in\mathcal{M}$, denoted by $\kopt_m\coloneqq \kopt_m(\mathcal{K}_{m-1})$, is determined using (\ref{jmaxgeneralM}).

The constraint for the probability of outage in (\ref{optimization1}) as a function of $\mathcal{K}_m$ can be rewritten as
\begin{align}
\label{outageconstraint}
\mathbb{P}\Big[1+\frac{1}{\Gamma}-\frac{1}{\rho}<k_1,\, 1+\frac{2}{\Gamma}-\frac{1}{\rho}<\frac{1}{2}\sum\limits_{i=1}^2 k_i,\,\hdots, 1+\frac{M}{\Gamma}-\frac{1}{\rho}<\frac{1}{M}\sum\limits_{i=1}^M{k_i}\Big]\leq \delta.
\end{align}
Hence, a typical device fails when the number of aggregate arrivals at each retransmission attempt, i.e. $k_m$, $m\in\mathcal{M}$ given $\mathcal{K}_{m-1}$, exceeds some threshold.

The average rate of aggregate arrivals $\lambda_M$ will be approximately 
\begin{align}
\label{lambdaMapproximation}
\lambda_M\approx\Big(\sum\limits_{i=1}^M k_i\Big) B,
\end{align}
where $ B\leq \frac{N}{nM}$ and $n$ satisfies (\ref{capacityconstraint}). Therefore, we have the following relationship:
\begin{align}
\label{Bupperbound}
B\leq \frac{N}{LM}\log_2(1+\sinrm), \,\, m\in\mathcal{M}.
\end{align}

Combining (\ref{lambdaMapproximation}) and (\ref{Bupperbound}), the rate $\lambda_M$ will satisfy 
\begin{align}
\lambda_M\approx \Big(\sum\limits_{i=1}^M k_i\Big) \frac{N}{LM}\log_2\left(1+\sinrM\right)=\Big(\sum\limits_{i=1}^M k_i\Big) \frac{N}{LM}\log_2\left(1+\frac{\rho M^2}{M+\rho \big(\sum\nolimits_{i=1}^M k_i-M\big)}\right).\nonumber
\end{align}
Consider the simple case where $M=1$. Then, we have $\lambda\approx k_1 \frac{N}{L}\log_2\Big(1+\frac{\rho }{1+\rho (k_1-1)}\Big)$. For $M>1$, we can derive $\lambda$ from $\lambda_M$ using (\ref{lambdaMvslambda}). Now consider the following limiting cases of $\lambda_M$:
\begin{enumerate}[(i)]
\item Very high $\snr$ $\rho$:
\begin{align}
\lambda_M\approx\lim\limits_{\rho\to\infty} \Big(\sum\limits_{i=1}^M k_i\Big) \frac{N}{LM}\log_2\left(1+\sinrM\right)%\nonumber\\
=\Big(\sum\limits_{i=1}^M k_i\Big) \frac{N}{LM}\log_2\left(1+\frac{M^2}{\big(\sum\nolimits_{i=1}^M k_i-M\big)}\right),\nonumber%\\
%\lambda_1&\approx\lim\limits_{\rho\to\infty} k \frac{N}{L}\log_2\left(1+\frac{\rho }{1+\rho \big(k-1\big)}\right)=k\frac{N}{L}\log\Big(1+\frac{1}{k-1}\Big),\nonumber
\end{align}
which is decreasing in $k_i\in\mathbb{Z}^+$.
\item Very low $\snr$ $\rho$: 
\begin{align}
\lambda_M\approx\lim\limits_{\rho\to 0}\Big(\sum\limits_{i=1}^M k_i\Big) \frac{N}{LM}\log_2\left(1+\sinrM\right)=0.\nonumber%\\
%\lambda_1&\approx\lim\limits_{\rho\to 0} k \frac{N}{L}\log_2\left(1+\frac{\rho }{1+\rho \big(k-1\big)}\right)=0.\nonumber
\end{align}
\item Small number of devices $k_m$, $m\in\mathcal{M}$ per bin:
\begin{align}
\lambda_M\approx\lim\limits_{\underset{m\in\mathcal{M}}{k_m\to 1}}\Big(\sum\limits_{i=1}^M k_i\Big) \frac{N}{LM}\log_2\left(1+\sinrM\right)=\frac{N}{L}\log_2\Big(1+\rho M\Big).\nonumber%\\
%\lambda_1&\approx\lim\limits_{k\to 1} k \frac{N}{L}\log_2\left(1+\frac{\rho }{1+\rho \big(k-1\big)}\right)=\frac{N}{L}\log_2\Big(1+\rho\Big).\nonumber
\end{align}
Furthermore, as $\rho\to 0$, $\frac{N}{L}\log_2\Big(1+\rho M\Big)\to\rho\frac{N}{L}\frac{M}{\log(2)}$, which decreases in $\snr$.
\item Large number of devices $k_m$, $m\in\mathcal{M}$ per bin:
\begin{align}
\lambda_M\approx\lim\limits_{\underset{m\in\mathcal{M}}{k_m\to \infty}}\Big(\sum\limits_{i=1}^M k_i\Big) \frac{N}{LM}\log_2\left(1+\sinrM\right)=\frac{N}{L}\frac{M}{\log(2)}.\nonumber%\\
%\lambda_1&\approx\lim\limits_{k\to \infty} k \frac{N}{L}\log_2\left(1+\frac{\rho }{1+\rho \big(k-1\big)}\right)=\frac{N}{L}\frac{1}{\log(2)}.\nonumber
\end{align}
\end{enumerate}

At high $\snr$, it is clear from (i) that $k_i$, $i\in\mathcal{M}$ should be made very small. From (iii) and (iv), we have that $\lambda_M$ can be made larger by decreasing $k_i$. Besides, $k_i$ may not be increased arbitrarily since otherwise the outage constraint in (\ref{optimization1}) is violated. From (ii), we have that at low $\snr$, $\lambda_M$ will be very small. The relations (i)-(iv) indicate that at sufficiently high $\snr$, we have $k_i=1$, and at low $\snr$, $\rho\frac{N}{L}\frac{M}{\log(2)} \leq \lambda_M \leq \frac{N}{L}\frac{M}{\log(2)}$, where the lower limit is achieved when $k_i=1$ and upper limit is achieved as $k_i\to\infty$. Hence, at low $\snr$ $k_i$ should be made as large as possible. Its maximum can be attained by not splitting the frequency resources, i.e. $B=1$. 

%We next provide a formal derivation at high $\snr$ and low $\snr$.

%%%%%%%%%%%%%%%%%%%%%%%%%%%%%%%%%%%%%
\subsection{High $\snr$ Regime}
\label{ShannonHighSNR}
At high $\snr$, resources can be utilized more efficiently because we have more freedom in selecting $B$ and $M$ in the optimization problem (\ref{optimization1}). It is easy to satisfy the capacity constraint (\ref{capacityconstraint}) and the outage probability constraint (\ref{outageconstraint}), when the set of $k_m$'s, $m\in\mathcal{M}$, are small. 

Observe from (\ref{jmaxgeneralM}) that given $k_m\geq 1$ arrivals choose a given frequency bin, each having $\snr$ $\rho$, the optimal %maximum 
number of aggregate arrivals $\kopt_m(\mathcal{K}_{m-1})$, $m\in\mathcal{M}$, can be made arbitrarily small by decreasing the block length $n$. 
%i.e. increasing $B$ and/or $M$. 
When $\rho$ is high, from (\ref{GammaRhorelation}), $\Gamma=2^{\frac{L}{n}}-1$ should be high, and hence, $n\leq \frac{N}{MB}$ could be made %arbitrarily 
small by increasing $B$ or $M$. However, $B$ or $M$ may not be increased arbitrarily because the capacity constraint of (\ref{optimization1}) may be violated for very small $n$.

Given the received $\snr$ $\rho$, we study the threshold $\Gamma_T$ such that $\kopt_m(\mathcal{K}_{m-1})=1$, $m\in \mathcal{M}$ whenever $\Gamma\geq\Gamma_T$, i.e. no other interferer is allowed in the same frequency bin. In this case, devices experience no interference and the outage constraint is eliminated from (\ref{optimization1}). Using (\ref{jmaxgeneralM}), when $\kopt_m(\mathcal{K}_{m-1})=1$, $m\in \mathcal{M}$, the relation between $\Gamma$ and $\rho$ %the maximum number of devices per bin 
is determined as 
\begin{align}
\kopt_m(\mathcal{K}_{m-1})=1%=\max\Big\{\Big \lfloor m+\frac{m^2}{\Gamma}-\frac{m}{\rho}-\sum\limits_{i=1}^{m-1}k_i  \Big\rfloor, 1 \Big\}
=\max\Big\{\Big \lfloor 1+\frac{m^2}{\Gamma}-\frac{m}{\rho}  \Big\rfloor, 1 \Big\}.\nonumber
\end{align}
Hence, for $\Gamma_T=\rho m$, the maximum number of devices per bin is $\kopt_m(\mathcal{K}_{m-1})=1$.

At high $\snr$, we have $k_m=1$ for $m\in\mathcal{M}$, and from (\ref{chaseSINR}) the Chase combiner output $\sinr$ linearly scales with $M$. Hence, from the capacity constraint in (\ref{capacityconstraint}) we require in the IBL regime that $\underset{m\in\mathcal{M}}{\min}\{C(\sinrm)\}=\underset{m\in\mathcal{M}}{\min} \log_2(1+\rho m)\geq \frac{L}{n}$.  
%\begin{align*}
%\frac{\rho}{1+\rho (k_1-1)}\geq 2^{\frac{L}{n}}-1,\,\, k_1\geq 1.
%\end{align*}
Equivalently, we need to satisfy $\rho\geq \Gamma=2^{\frac{L}{n}}-1$ where $n\leq N/(MB)$. This yields the %following upper bound on $B$: 
bound $B\leq \frac{N}{ML}\log_2(1+\rho)$. 

The outage probability in (\ref{PoutGeneralShannon}) can be made arbitrarily small, and the constraint $\PfailIBL(\lambda,L,B,M)\leq \delta$ is satisfied by choosing $B$ sufficiently large so that there is no other interferer in the same resource bin. To maximize $\lambda$, the number of bins $B$ should be maximized given the resource constraints.

\begin{prop}\label{Bopt_generalM_nofading-HighSNR}
Given $\delta, L, \rho, N$, the optimal number of bins, $\Bopt$, at high $\snr$ is
\begin{eqnarray}
\label{OptimalB-highSNR-nofading}
\Bopt=\frac{N}{\Mopt L}\log_2(1+\rho),
\end{eqnarray}
and the maximum %average 
user arrival rate per frame is $\lambda_{\Mopt}=\alpha_{\Mopt}(\delta) \Bopt$, where $\alpha_M(\delta)$ satisfies
\begin{align}
\label{alphadelta}
%\delta=\left(1-\frac{\alpha_M(\delta)}{e^{\alpha_M(\delta)}-1}\right)^M,
\alpha_M(\delta)=-\mathcal{W}\big((\delta^{\frac{1}{M}}-1)e^{\delta^{\frac{1}{M}}-1}\big)+\delta^{\frac{1}{M}}-1,
\end{align}
where $\mathcal{W}$ is the Lambert function and $\Mopt$ is given as
\begin{eqnarray}
\label{OptimalM-highSNR-nofading}
\Mopt=\underset{M \geq 1}{\arg\max}\,\,  B \alpha_M(\delta) (1-\delta^{1/M}),
\end{eqnarray}
where $B\approx \frac{N}{ML}\log_2(1+\rho)$.
\end{prop}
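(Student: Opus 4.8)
The plan is to reduce the high-$\snr$ design to a single-bin occupancy problem, solve the resulting outage equation in closed form via the Lambert function, and then optimize over $M$. I would proceed in three stages: fix $\Bopt$ from the capacity/resource budget, extract $\alpha_M(\delta)$ from the outage constraint, and finally convert the aggregate rate $\lambda_M$ back to the original rate $\lambda$ to obtain the objective defining $\Mopt$.

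First I would establish $\Bopt$. The discussion preceding the statement shows that at high $\snr$ the threshold choice $\Gamma_T=\rho m$ forces $\kopt_m(\mathcal{K}_{m-1})=1$ for every $m\in\mathcal{M}$, so a device is decodable only when it is alone in its bin. Substituting $k_m=1$ into the capacity constraint (\ref{capacityconstraint}) gives $\rho\geq\Gamma=2^{L/n}-1$, hence $n\geq L/\log_2(1+\rho)$; combined with the resource budget $n\leq N/(MB)$ this yields $B\leq\frac{N}{ML}\log_2(1+\rho)$. Since $\lambda_M$ is increasing in $B$ through (\ref{lambdaMapproximation}), the bound is attained with equality, which is precisely (\ref{OptimalB-highSNR-nofading}) once $\Mopt$ is fixed.

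Next I would compute the outage and solve for $\alpha_M(\delta)$. With $\kopt_m=1$, the event ``device fails on attempt $m$'' is exactly ``at least two devices occupy its chosen bin.'' Using the per-bin Poisson parameter $\lambda_M/B$ that appears in (\ref{PoutGeneralShannon}) together with (\ref{Dfunction}), the per-attempt conditional failure probability is $1-D(1,\lambda_M/B)=1-\frac{\lambda_M/B}{e^{\lambda_M/B}-1}$. Because every (re)transmission selects a fresh bin uniformly at random, the $M$ attempts are independent under the Poisson composite-arrival model, so $\PfailIBL$ collapses to $\big(1-\frac{\lambda_M/B}{e^{\lambda_M/B}-1}\big)^M$. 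Imposing $\PfailIBL=\delta$ with equality forces the per-attempt failure probability to equal $\delta^{1/M}$, i.e. $\frac{a}{e^{a}-1}=1-\delta^{1/M}$ with $a\coloneqq\lambda_M/B$. Solving this transcendental equation is the crux: writing $c=1-\delta^{1/M}$ and substituting $u=a+c$ turns it into $u\,e^{-u}=c\,e^{-c}$, whose solution on the relevant branch gives $a=-\mathcal{W}\big((\delta^{1/M}-1)e^{\delta^{1/M}-1}\big)+\delta^{1/M}-1=\alpha_M(\delta)$, matching (\ref{alphadelta}); thus $\lambda_M=\alpha_M(\delta)B=\alpha_M(\delta)\Bopt$.

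Finally I would optimize over $M$. With per-attempt independent failure at rate $\delta^{1/M}$, the failure up to and including attempt $m$ is $\Pfail(\lambda,L,B,m)=\delta^{m/M}$; substituting into (\ref{lambdaMvslambda}) and summing the geometric series gives $1+\sum_{m=1}^{M-1}\delta^{m/M}=\frac{1-\delta}{1-\delta^{1/M}}$, so the original arrival rate is $\lambda=\lambda_M\frac{1-\delta^{1/M}}{1-\delta}=\frac{1}{1-\delta}\,\Bopt\,\alpha_M(\delta)\,(1-\delta^{1/M})$. Since $1/(1-\delta)$ is independent of $M$, maximizing $\lambda$ is equivalent to maximizing $B\,\alpha_M(\delta)(1-\delta^{1/M})$ with $B\approx\frac{N}{ML}\log_2(1+\rho)$, which is exactly (\ref{OptimalM-highSNR-nofading}). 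The main obstacle I anticipate is the inversion in the second stage: selecting the correct (lower, $\mathcal{W}_{-1}$) branch of the Lambert function so that $\alpha_M(\delta)>0$, since the principal branch returns the spurious root $a=0$. A secondary care point is keeping the per-slot versus per-frame arrival-rate conventions consistent (the Poisson parameter $\lambda_M/(BM)$ stated for slot occupancy versus the $\lambda_M/B$ used in (\ref{PoutGeneralShannon})), so that the identification $\lambda_M=\alpha_M(\delta)\Bopt$ is justified.
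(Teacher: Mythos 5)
Your proof is correct and follows essentially the same route as the paper's: enforcing single occupancy ($\kopt_m=1$) together with the capacity constraint fixes $\Bopt\approx\frac{N}{ML}\log_2(1+\rho)$, the per-attempt outage equation $\big(1-\alpha/(e^{\alpha}-1)\big)^M=\delta$ is inverted to get $\alpha_M(\delta)$, and the geometric-series identity $\lambda=\lambda_M(1-\delta^{1/M})/(1-\delta)$ yields the objective defining $\Mopt$. Your explicit Lambert-$\mathcal{W}$ substitution and the selection of the $\mathcal{W}_{-1}$ branch (to avoid the spurious root $\alpha=0$) are details the paper leaves implicit, but the argument is the same.
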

 
\begin{proof}
See Appendix \ref{App:Appendix-Bopt_generalM_nofading-HighSNR}.
%See Appendix D of \cite{MaHuAnd2017ICC}.
\end{proof}

At high $\snr$, from Prop. \ref{Bopt_generalM_nofading-HighSNR}, we infer that $\Bopt\geq 1$ linearly scales with the number of resource symbols $N$. The number of devices should linearly scale with the number of frequency bins, and ideally such that there is a single user per bin. At high SNR, the number of bins will be higher than the number of users, i.e. $\alpha_M(\delta)<1$, when the target $\sinr$ outage rate $\delta$ is sufficiently small. Hence, resources should be split into as many frequency bins as possible such that the devices do not experience any interference. The throughput scales with the optimal number of bins $\Bopt$ and hence with $N$. The typical device will need fewer attempts as $\delta$ increases, since the $\sinr$ requirement for successful decoding decreases. The %optimal value of the 
maximum number of (re)transmissions $\Mopt\geq 1$ is a non-increasing function of $\delta$, and can be determined from (\ref{OptimalM-highSNR-nofading}) as a function of $\delta$. 
%their model, the maximum number of re-transmissions permitted in the optimal scheme is a non-increasing function of the SINR

%%%%%%%%%%%%%%%%%%%%%%%%%%%%%%%%%%%%%
\subsection{Low $\snr$ Regime}
\label{ShannonLowSNR}
At low $\snr$, since the devices mainly suffer from interference, and it might not be possible to split the symbol resources. It might also not be possible to improve the output $\sinr$ via Chase combining, and the Chase combiner output $\sinr$ will also be low. Due to low $\sinr$, the capacity constraint (\ref{capacityconstraint}) of the optimization formulation in (\ref{optimization1}) 
may only be satisfied when the symbol resources are shared, i.e., the set of $k_m$'s, $m\in\mathcal{M}$, are large. This can be handled by choosing a sufficiently large block length $n$. In addition, the target outage rate (\ref{outageconstraint}) can be met using a small threshold $\Gamma$. From (\ref{GammaRhorelation}), when $\rho$ is low, $\Gamma=2^{\frac{L}{n}}-1$ should be small, and hence, the block length $n\leq N/(MB)$ should be large (using small $B$, $M$). We can make $n$ as large as $N$ by letting $B=M=1$. At low $\snr$, we infer that all devices should share the resources in order to maximize the average arrival rate that can be supported.  

%Small $\rho$ does not necessarily imply small $\Gamma$. However, when $\rho$ is large $k_i$'s cannot be made arbitrarily large.

We next provide a sufficient condition for $\Bopt=\Mopt=1$.

\begin{prop}\label{lowSNRsufficientcondition}
Given the total number of resources $N$, when received $\snr$ $\rho$ is such that $\rho<2^{\frac{2L}{N}}-1$, then we have $\Bopt=\Mopt=1$.
\end{prop}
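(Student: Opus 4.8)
The plan is to rule out every partition other than $B=M=1$ under the hypothesis, combining a feasibility argument (which handles all $B\geq 2$) with a throughput-monotonicity argument in $M$. The starting point is the uniform bound on the Chase-combined $\snr$ in (\ref{chaseSINR}): since each bin is shared by $K_i\geq 1$ devices, $\sum_{i=1}^m K_i\geq m$, so $\SINRm=\frac{\rho m^2}{m+\rho(\sum_{i=1}^m K_i-m)}\leq \rho m\leq \rho M$, with equality only in the interference-free case $K_i=1$. Thus $\rho M$ is the largest $\sinr$ any device can ever present to the decoder over its $M$ attempts. Because $n\leq N/(BM)$, the smallest attainable threshold is $\Gamma=2^{L/n}-1\geq 2^{LBM/N}-1$, so even a device alone in its bin succeeds only if $\rho M\geq 2^{LBM/N}-1$; as $\lambda\to 0$ the conditional law $D(\cdot,\lambda_B)$ in (\ref{Dfunction}) concentrates on one device per bin, hence if this best case fails then $\PfailIBL\to 1>\delta$ and the partition supports zero throughput. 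Writing $g(B,M)\coloneqq (2^{LBM/N}-1)/M$, feasibility therefore requires $\rho\geq g(B,M)$.

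The next step is to minimize $g$ over all partitions with $B\geq 2$. Clearly $g$ is increasing in $B$, so $g(B,M)\geq g(2,M)$. Setting $z\coloneqq 2^{2L/N}>1$, the elementary inequality $(z^M-1)/M=(1+z+\cdots+z^{M-1})(z-1)/M\geq z-1$ gives $g(2,M)\geq g(2,1)=2^{2L/N}-1$. Hence $\min_{B\geq 2,\,M\geq 1}g(B,M)=2^{2L/N}-1$, and the hypothesis $\rho<2^{2L/N}-1$ forces $\PfailIBL=1$ for every partition with $B\geq 2$. This eliminates all frequency splitting and pins $\Bopt=1$.

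It then remains to show that, once $B=1$, no retransmission budget $M\geq 2$ beats $M=1$. A pure feasibility argument does not close this, since $g(1,M)\geq g(1,2)=(2^{2L/N}-1)/2$ can fall below $\rho$. Instead I would compare the number of arrivals each partition can sustain. From (\ref{GammaRhorelation})--(\ref{jmaxgeneralM}) the arrivals supportable across the $M$ slots obey $\sum_{i=1}^M k_i\leq M+M^2/\Gamma_M-M/\rho$ with $\Gamma_M=2^{LM/N}-1$, so the per-slot supportable count is $1+M/\Gamma_M-1/\rho$. With $y\coloneqq 2^{L/N}>1$, the identity $(y^M-1)/(y-1)=1+y+\cdots+y^{M-1}>M$ yields $M/\Gamma_M<1/\Gamma_1$, so this count is strictly largest at $M=1$. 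Since by (\ref{lambdaMapproximation}) the rate scales with the supportable count, and since the retransmission normalization $\lambda=\lambda_M/(1+\sum_{m=1}^{M-1}\PfailIBL(\lambda,L,1,m))$ from (\ref{lambdaMvslambda}) only further penalizes $M\geq 2$, the optimum is $\Mopt=1$.

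The hard part is this last step: the per-slot comparison is exact only for the real-valued supportable count, whereas $\kopt_m$ in (\ref{jmaxgeneralM}) carries a floor and a $\max\{\cdot,1\}$, and the attained throughput solves the Poisson outage equation $\PfailIBL(\lambda,L,1,M)=\delta$ built from (\ref{PoutGeneralShannon}) rather than the raw count. Making the $M$-comparison rigorous thus requires controlling the rounding in $\kopt_m$ and verifying that the outage equation preserves the ordering of supportable counts; I expect the low-$\snr$ expansions (i)--(iv) to supply exactly this monotonicity. In the degenerate sub-regime $\rho<2^{L/N}-1$ even $(1,1)$ is infeasible and every partition yields $\Lopt=0$, so the claim holds vacuously there, and the substantive content concerns $2^{L/N}-1\leq\rho<2^{2L/N}-1$.
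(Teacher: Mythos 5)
Your elimination of every partition with $B\geq 2$ is sound, and in one respect it is more careful than the paper's own argument: you work with the weakest physically meaningful feasibility requirement (a lone device must be able to succeed by its $M$-th attempt, i.e. $\rho M\geq 2^{LBM/N}-1$), and your minimization of $g(B,M)=(2^{LBM/N}-1)/M$ over $B\geq 2$, $M\geq 1$ correctly identifies $2^{2L/N}-1=g(2,1)$ as exactly the constant appearing in the proposition. The genuine gap is precisely where you flag it: the case $B=1$, $M\geq 2$. Your per-slot count comparison does not close it, for two reasons. First, a frame contains $M$ slots, so the per-frame aggregate bound from (\ref{GammaRhorelation}) is $M(1+M/\Gamma_M-1/\rho)$, roughly $M$ times the $(1,1)$ bound when $L/N$ is small; to descend from the aggregate rate $\lambda_M$ to the new-arrival rate $\lambda$ via (\ref{lambdaMvslambda}) you need the divisor $1+\sum_{m=1}^{M-1}\PfailIBL(\lambda,L,1,m)$ to equal $M$, i.e. $\PfailIBL(\lambda,L,1,m)=1$ for every $m\leq M-1$. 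That requires $\rho m<\Gamma_M$ for all $m\leq M-1$, which can fail inside the hypothesis regime: with $y=2^{L/N}<\sqrt{2}$, $M=4$, and $\rho$ close to $y^2-1$, one has $3\rho\geq y^4-1=\Gamma_4$, so attempt $3$ can succeed and the divisor is strictly less than $4$, breaking the chain of inequalities. Second, the limits (i)--(iv) you hope will supply the missing monotonicity are asymptotic in $\rho$ or in $k_m$ ($\rho\to 0$ or $\infty$, $k_m\to 1$ or $\infty$); they say nothing about the intermediate window $2^{L/N}-1\leq\rho<2^{2L/N}-1$, which is exactly where all the substance of the claim lives.

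It is worth seeing why the paper does not face this difficulty (Appendix \ref{App:Appendix-lowSNRsufficientcondition}). The paper reads the capacity constraint (\ref{CapacityConstraintGeneralModel}) as a requirement at \emph{every} attempt $m\in\mathcal{M}$, in particular at $m=1$, where Chase combining has not yet helped and $\SINRone\leq\rho$. This immediately gives $MB\leq \frac{N}{L}\log_2(1+\SINRone)\leq\frac{N}{L}\log_2(1+\rho)<2$, hence $B=M=1$, with no case split and no throughput comparison. The price is the stronger reading of the constraint; your reading, under which a device may fail early and be rescued by combining, is weaker and arguably more faithful to the retransmission protocol, but then the $(1,M)$-versus-$(1,1)$ throughput ordering becomes a claim that must actually be proved — controlling the floors and $\max\{\cdot,1\}$ in (\ref{jmaxgeneralM}), accounting for slots consumed by forced early failures (under the hypothesis $\Gamma_M=2^{LM/N}-1>\rho$ for every $M\geq 2$, so the first attempt is always wasted), and verifying that the Poisson outage equation preserves the ordering. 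None of that is in your write-up, so as it stands the proof establishes only $\Bopt=1$, not $\Mopt=1$.
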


\begin{proof}
See Appendix \ref{App:Appendix-lowSNRsufficientcondition}. 
\end{proof}

From Prop. \ref{lowSNRsufficientcondition}, if $\snr$ is small enough, $\Bopt=\Mopt=1$, and since $\Gamma=2^{\frac{L}{n}}-1$ will be small, the outage constraint (\ref{outageconstraint}) will be met, and the resource utilization will be mainly constrained by the capacity constraint (\ref{capacityconstraint}). The throughput %maximum rate of arrivals that can be supported 
at low $\snr$ is determined next.

\begin{prop}\label{Lambda_approx_low_SNR}
At low $\snr$, we have $\Bopt=1$, $\Mopt=1$, and the maximum arrival rate $\Lopt$ can be approximated by
\begin{eqnarray}
\label{Lambdamax_approx_low_SNR}
\Lopt\approx \Big(\sqrt{\kopt_1+{(\Q^{-1}(\delta))^2}/{4}}-{\Q^{-1}(\delta)}/{2}\Big)^2,\quad \kopt_1=\Big\lfloor 1+\frac{1}{\Gamma}-\frac{1}{\rho}\Big\rfloor,
\end{eqnarray}
where $\Gamma=2^{\frac{L}{N}}-1$, and $\Q(x)$ is the tail probability of the standard normal distribution.
\end{prop}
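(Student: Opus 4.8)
The plan is to specialize the general constant-$\snr$ outage expression to the single-slot regime singled out by Prop.~\ref{lowSNRsufficientcondition}, impose that the outage constraint in (\ref{optimization1}) is active at the optimum, and then replace the Poisson tail by its Gaussian approximation so that solving for $\Lopt$ reduces to a quadratic.

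First I would invoke Prop.~\ref{lowSNRsufficientcondition}: since $\rho<2^{2L/N}-1$, the optimizer is $\Bopt=\Mopt=1$, so the block length may be taken as large as $n=N$, giving $\Gamma=2^{L/N}-1$ and, from (\ref{jmaxgeneralM}) with $m=1$, the single threshold $\kopt_1=\lfloor 1+\frac{1}{\Gamma}-\frac{1}{\rho}\rfloor$. Because there are no retransmissions, $\lambda_M=\lambda$ by (\ref{lambdaMvslambda}) and the per-bin rate is again $\lambda$. Specializing (\ref{PoutGeneralShannon}) to $M=B=1$ and using (\ref{Dfunction}), the failure probability is the conditional Poisson tail
\begin{align}
\PfailIBL(\lambda,L,1,1)=\sum_{k=\kopt_1+1}^{\infty}\frac{\lambda^k e^{-\lambda}}{k!\,(1-e^{-\lambda})}=\frac{\mathbb{P}(\mathcal{X}\geq \kopt_1+1)}{1-e^{-\lambda}},\nonumber
\end{align}
where $\mathcal{X}\sim\Poisson(\lambda)$.

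Next I would note that, since $\lambda$ is being maximized, the constraint $\PfailIBL\leq\delta$ binds at $\Lopt$, so I set the expression above equal to $\delta$. In the low-$\snr$ regime the supported rate is large, so $e^{-\lambda}\approx 0$ and the conditioning factor $1-e^{-\lambda}\approx 1$; the binding condition collapses to $\mathbb{P}(\mathcal{X}\geq \kopt_1+1)\approx\delta$. I then approximate the Poisson tail by its Gaussian counterpart, treating $\mathcal{X}$ as $\mathcal{N}(\lambda,\lambda)$ (mean and variance both $\lambda$), which yields $\mathbb{P}(\mathcal{X}>\kopt_1)\approx\Q\big((\kopt_1-\lambda)/\sqrt{\lambda}\big)$. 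Equating to $\delta$ and inverting gives $(\kopt_1-\lambda)/\sqrt{\lambda}=\Q^{-1}(\delta)$.

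Finally, substituting $x=\sqrt{\lambda}$ turns this into the quadratic $x^2+\Q^{-1}(\delta)\,x-\kopt_1=0$; taking the positive root and squaring produces exactly (\ref{Lambdamax_approx_low_SNR}). The algebra here is just the quadratic formula and is routine. The main obstacle — and the reason the statement is an approximation rather than an identity — is controlling the two approximations just invoked: the Gaussian (central-limit) approximation of the Poisson tail, which is accurate precisely in the large-$\lambda$, low-$\snr$ regime under study, together with the neglect of the conditioning factor $1-e^{-\lambda}$ and of the integer rounding hidden in $\kopt_1$. I would argue that each of these errors is negligible as the total resource $N$ (hence $\lambda$) grows, so that (\ref{Lambdamax_approx_low_SNR}) captures the leading-order throughput in the low-$\snr$ regime.
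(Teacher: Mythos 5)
Your proposal is correct and follows essentially the same route as the paper's own proof: both treat the outage constraint as binding at the maximum arrival rate, approximate the conditional Poisson tail at $\Bopt=\Mopt=1$ by a Gaussian $\mathcal{N}(\lambda,\lambda)$ so that $\delta\approx\Q\big((\kopt_1-\lambda)/\sqrt{\lambda}\big)$, and solve the resulting quadratic in $\sqrt{\lambda}$ to arrive at (\ref{Lambdamax_approx_low_SNR}). The only differences are cosmetic: you justify $\Bopt=\Mopt=1$ by citing Prop.~\ref{lowSNRsufficientcondition} rather than the paper's integrality/scaling argument for $B$, and you are slightly more explicit than the paper in dismissing the conditioning factor $1-e^{-\lambda}$.
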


\begin{proof}
See Appendix \ref{App:Appendix-Lambda_approx_low_SNR}.
\end{proof}

%An approximate relation for $\Lopt$ at low $\snr$ for general $M>2$ and $B=1$ is given in Prop. \ref{LambdaMAX_generalM_nofading-lowSNR}.
%\begin{prop}
%\label{LambdaMAX_generalM_nofading-lowSNR}
%{\bf $M>2$, constant $\snr$.} In this regime $B=1$, and we have $L\approx \frac{N}{M}\log(1+\SINRm)$, and hence, $j_{\rm tot}=\sum\nolimits_{m=1}^M \kopt_m \approx\Big \lfloor M+\frac{M^2}{2^{\frac{ML}{N}}-1}-\frac{M}{\rho}\Big\rfloor$, and $k_{\max,m}(k_{1:m-1})\approx\max\Big\{ \Big \lfloor m+\frac{m^2}{2^{\frac{ML}{N}}-1}-\frac{m}{\rho}-\sum\limits_{i=1}^{m-1}k_i  \Big\rfloor, 1\Big\}$. Given a target outage,
%\begin{align*}
%\PfailIBL(\lambda,L,B,M)=\sum\limits_{\substack{k_1>\\k_{\max,1}}}\sum\limits_{\substack{k_2>\\k_{\max,2}(k_1)}}\hdots\sum\limits_{\substack{k_M>\\k_{\max, M}(k_{1:{M-1}})}} \prod\limits_{m=1}^M  D(k_m,\lambda_M)\leq\delta,
%\end{align*}
%$\lambda_M$ can be uniquely determined.
%\end{prop}

At low $\snr$, from Prop. \ref{Lambda_approx_low_SNR}, we can infer that all devices should share the resources in order to maximize the average arrival rate that can be supported. This is because the combined $\sinr$ at the BS will also be low, and for successful decoding the block length should be made as large as possible to decrease the threshold $\Gamma$. Hence, we require $\Bopt=\Mopt=1$. 

We summarize the scaling of parameters $\Lopt$, $\Bopt$ and $\Mopt$ in the IBL regime in Table \ref{table:tab3}.

It is possible to have a scenario in which the devices have different $\snr$ levels at the BS. To provide fairness among classes of devices with different $\snr$ levels, the fraction of  TFS resources allocated to each class can be optimized. However, we leave the resource optimization that supports different classes of devices as our future work.

\begin{table*}[t!]%\footnotesize
\begin{center}
\setlength{\extrarowheight}{10pt}
\begin{tabular}{| c | c | }
%\begin{tabular}{ |l|l| }
\hline
{\bf Low $\snr$/Constant $\snr$}  &  {\bf Low $\snr$/Rayleigh fading}\\
$\Lopt\approx \big(\sqrt{\kopt_1+\frac{(\Q^{-1}(\delta))^2}{4}}-\frac{\Q^{-1}(\delta)}{2}\big)^2$ 
&  $\Lopt=\{\Lopt \vert\delta=1-f\big(\mu,\frac{\Lopt}{\Bopt},\Gopt\big)\}$ \\
$\Bopt=\Mopt=1^*$, $\kopt_1\approx \big \lfloor 1+\frac{1}{2^{\frac{L}{N}}-1}-\frac{1}{\rho}\big\rfloor$  & $\Bopt=\Mopt=1^*$, $\Gopt=2^{\frac{\Mopt L}{N}}-1$\\
\hline
{\bf High $\snr$/Constant $\snr$} & {\bf High $\snr$/Rayleigh fading}\\
$\Lopt=\lambda_{\Mopt}\Big(\frac{1-\delta^{1/{\Mopt}}}{1-\delta}\Big)$ & $\Lopt=\lambda_{\Mopt}/\Big[1+\sum\limits_{m=1}^{\Mopt-1}\PfailIBL(\lambda,L,B,m)\Big]$ \\
$\Mopt=\underset{M \geq 1}{\arg\max} \,\, \lambda_M (1-\delta^{1/M})$ & 
$\Mopt=\underset{M \geq 1}{\arg\max}\,\, \frac{\lambda_M}{1+\sum\limits_{m=1}^{M-1}m!(\mu\Gopt\rho^{-1})^m}$\\
$\Bopt=\frac{N}{\Mopt L}\log_2(1+\rho)$ & $\Bopt\approx\frac{N}{\Mopt L}\log_2(1+\Gopt)$, $\Gopt=\Mopt \log(\frac{1}{1-\delta})\rho$\\
\hline
\end{tabular}
\end{center}
\caption{Scaling results for different $\snr$ regimes at constant $\snr$, with Rayleigh fading \cite{MaHuAnd2017ICC} in the IBL regime.}
\label{table:tab3}
\end{table*}

%%%%%%%%%%%%%%%%%%%%%%%%%%%%%%%%%%%%%%%%%
%%%%%%%%%%%%%%%%%%%%%%%%%%%%%%%%%%%%%%%%%
\section{Finite Block Lengths}
\label{FBL}
In this section, we assume that the encoded block length $n$ symbols is finite, which is the practical case for short packet sizes. We exploit the finite block length (FBL) model of Polyanskiy {\em et al.} \cite{Polyanskiy2010} to characterize the performance of the FBL regime by optimizing the number of retransmission attempts and the required number of bins.

The maximal rate achievable with error probability $\varepsilon$, as a function of the block length $n$ and $\sinr$ is characterized by the normal approximation for Gaussian channels \cite{Polyanskiy2010}: 
\begin{align}
\label{FBLrate}
\Cfbl_{n,\varepsilon}(\SINR)= \log_2(1+\SINR)-\sqrt{\frac{V(\SINR)}{n}}\Q^{-1}(\varepsilon)+\frac{0.5\log_2(n)}{n}+o\left(\frac{1}{n}\right),
\end{align}
where $V(\SINR)$ is the channel dispersion given by 
\begin{align}
\label{dispersion}
V(\SINR)=\left[1-\frac{1}{(1+\SINR)^2}\right]\log_2^2(e).
\end{align}
Note that (\ref{FBLrate}) closely approximates $\Cfbl_{n,\varepsilon}(\SINR)$ for block lengths $n \geq 100$ bits \cite{Polyanskiy2010}. 

Given the payload size $L$ and encoded block length $n$, incorporating the capacity constraint (\ref{CapacityConstraintGeneralModel}) into (\ref{FBLrate}), the $\sinr$ threshold for the FBL regime is the inverse function of $\Cfbl_{n,\varepsilon}(\SINR)$ evaluated at $L/n$, i.e. $\Gamma=C_{n,\varepsilon}^{-1}(L/n)$. %$\Gamma_m=2^{\frac{L}{n}+\Gfbl(n,\SINRm,\varepsilon)}-1$.
Combining the capacity constraint in (\ref{CapacityConstraintGeneralModel}) with the Chase combiner output $\SINRm$, we investigate the probability of outage at constant $\snr$.

Based on an asymptotic expansion of the maximum achievable rate in (\ref{FBLrate}), we calculate the block error rate for the FBL regime as  
\begin{align}
\label{BlockErrorRate}
\eber(n,L,\SINR)\approx \Q(f(n,L,\SINR)),
\end{align}
which is shown to be valid for packet sizes as small as $L = 50$ bits \cite{Polyanskiy2010}, where 
\begin{align}
f(n,L,\SINR)=\frac{n\log_2(1+\SINR)+0.5\log_2 n-L}{\sqrt{nV(\SINR)}}. 
\end{align}
%which is non-decreasing in $\SINR$. 

The throughput optimization problem in the FBL regime can also be written as (\ref{optimization1}).  
\begin{comment}
\begin{equation}
\label{optimization2}
\begin{aligned}
\Lopt=&\underset{B, \,M\in\mathbb{Z}^+, \, \mathcal{K}_M}{\max}
& & \lambda \\
& \hspace{0.6cm}\text{s.t.} 
& & \PfailFBL(\lambda,L,B,M) \leq \delta, 
\end{aligned}
\end{equation}
\end{comment}
We denote the probability of outage up to and including the $M^{\rm th}$ (re)transmission attempt for this regime by $\PfailFBL(\lambda,L,B,M)$, as will be detailed in Prop. \ref{FBLoutagenofadinggeneralM}. While multiple transmissions ($M>1$) cannot be independently decoded without error, i.e., the block error rate %$\varepsilon$ 
of an individual transmission can be larger than the target $\sinr$ outage rate $\delta$, Chase combining of $M$ transmissions may help meet the target outage rate. If $M=1$, the block error rate $\varepsilon$ has to satisfy $\varepsilon\leq\delta$. 

\begin{prop}\label{FBLoutagenofadinggeneralM}
Given $B$, $M$, and %symbol resources 
$N$, the probability of outage in the FBL regime is given by 
\begin{align}
\label{PoutdefinitionFBL}
\PfailFBL(\lambda,L,B,M)=\sum\limits_{m=1}^M\,\,\sum\limits_{k_m=1}^{\infty}\,\, \prod\limits_{i=1}^M D\Big(k_i,\frac{\lambda_M}{B}\Big)\eber\Big(n,L,\sinri\Big),
\end{align}
where
\begin{align}
\label{lambdaMvslambdaFBL}
\lambda_M=\lambda\left[1+\sum\limits_{m=1}^{M-1}\PfailFBL(\lambda,L,B,m)\right].
\end{align}
\end{prop}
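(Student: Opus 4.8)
The plan is to mirror the structure of the IBL proof of Proposition~\ref{OutageGeneralM-nofading}, changing only the mechanism by which a single (re)transmission fails. In the IBL regime a device fails on attempt $m$ precisely when its Chase-combined $\sinr$ drops below the hard threshold $\Gamma$, which is why the failure region is encoded through the summation limits $\kopt_m$ in~(\ref{PoutGeneralShannon}). In the FBL regime, by contrast, the normal approximation~(\ref{BlockErrorRate}) says that decoding at attempt $m$ fails only \emph{with probability} $\eber(n,L,\sinrm)$ given the combined SINR realization $\sinrm$. So the deterministic failure indicator is replaced by the soft block-error factor $\eber(n,L,\sinri)$, and the inner summation over $k_m$ is extended back to start at $1$, since now every arrival configuration contributes a nonzero failure probability rather than only those exceeding $\kopt_m$.

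First I would condition on a fixed realization $\mathcal{K}_M=\{k_1,\ldots,k_M\}$ of the per-attempt aggregate arrival counts. By Proposition~\ref{chase}, once $\mathcal{K}_m$ is fixed the combined SINR realization $\sinrm$ is the deterministic quantity~(\ref{chaseSINR}), so the conditional probability of a decoding error on the $m$-th attempt is exactly $\eber(n,L,\sinrm)$. I would then express the event ``fail up to and including attempt $M$'' as the intersection of the per-attempt decoding failures, and, treating these failures as conditionally independent given $\mathcal{K}_M$, write the conditional outage probability as $\prod_{i=1}^M \eber(n,L,\sinri)$.

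Next I would average over the arrival configuration. Each $k_i\geq 1$ follows the conditional Poisson law $D(k_i,\lambda_M/B)$ of~(\ref{Dfunction}), and, under the independence-across-slots assumption already invoked in Section~\ref{IBL}, the joint probability of $\{k_1,\ldots,k_M\}$ factorizes as $\prod_{i=1}^M D(k_i,\lambda_M/B)$. Taking the expectation, i.e.\ summing the product of the configuration probability and the conditional failure probability over all $k_1,\ldots,k_M\geq 1$ (the iterated sum that~(\ref{PoutdefinitionFBL}) writes compactly as $\sum_{m=1}^M\sum_{k_m=1}^{\infty}$), regroups into $\prod_{i=1}^M D(k_i,\lambda_M/B)\,\eber(n,L,\sinri)$ and yields~(\ref{PoutdefinitionFBL}). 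The self-consistency relation~(\ref{lambdaMvslambdaFBL}) then follows verbatim from the argument giving~(\ref{lambdaMvslambda}), since $\lambda_M$ is still the sum of the fresh arrival rate $\lambda$ and the retransmission rate generated by devices that have failed up to each of the first $M-1$ attempts, now measured by $\PfailFBL$.

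The main obstacle is justifying the product form $\prod_{i=1}^M \eber(n,L,\sinri)$ for the conditional joint failure probability. Because Chase combining reuses the energy of the earlier attempts, the decoding outcomes at attempts $1,\ldots,M$ remain positively correlated even after conditioning on $\mathcal{K}_M$, so the factorization is a modeling approximation (the FBL analogue of the deterministic nested failure event used in the IBL proof) rather than an exact identity; I would state it explicitly, noting that it is consistent with the block-fading, independent-noise structure of Section~\ref{IBL}. A secondary point to handle carefully is that $n$, $\Gamma$, and hence each $\sinri$ are coupled to $B$ and $M$ through $n\leq N/(BM)$, so I would keep $n$ fixed by the resource budget throughout the conditioning and only then take the average.
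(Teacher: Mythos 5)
Your proposal is correct and follows essentially the same route as the paper's proof: condition on the arrival realizations $\mathcal{K}_M$, invoke the normal approximation so that each attempt fails with probability $\Q(f(n,L,\sinri))=\eber(n,L,\sinri)$, impose conditional independence of decoding errors across attempts (which the paper likewise flags as an approximation, since the Chase-combined $\sinr$s are dependent), and average over the factorized Poisson law $D(k_i,\lambda_M/B)$, with (\ref{lambdaMvslambdaFBL}) inherited from the IBL argument. The only cosmetic difference is that the paper starts from the joint event $\max_{m\in\mathcal{M}}\Cfbl_{n,\varepsilon}(\SINRm)<L/n$ and introduces a standard Gaussian $\mathcal{Z}$ before collapsing to the same product form.
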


\begin{proof}
See Appendix \ref{App:AppendixFBLoutagenofadinggeneralM}.
\end{proof}

Note that (\ref{lambdaMvslambdaFBL}) is a fixed point equation in the form of $\lambda_M=g(\lambda_M)$, where $g(\lambda_M)$ is determined by (\ref{PoutdefinitionFBL}). The solution $\lambda_M$ is unique because (\ref{PoutdefinitionFBL}) is a monotonically decreasing function of $\lambda_M$.

Note the relationship between the probability of outage for the FBL regime in (\ref{PoutdefinitionFBL}), and for the IBL regime, as given in (\ref{PoutGeneralShannon}). In the IBL regime, for a given block length $n$, for any $k_m$ value less than or equal to $\kopt_m(\mathcal{K}_{m-1})$ given in (\ref{jmaxgeneralM}), $m\in\mathcal{M}$, the capacity constraint in (\ref{CapacityConstraintGeneralModel}) is satisfied. However, in the FBL regime, the capacity constraint is stricter than the IBL capacity constraint. Given a block length $n$, although $\eber\big(n,L,\sinrm\big)$ can be made arbitrarily small for small $k_m$ at high $\snr$, $\eber\big(n,L,\sinrm\big)>0$ whenever $\sinrm$ is finite, even when $k_m=1$.

Although (\ref{FBLrate}) and (\ref{dispersion}) are valid for the additive white Gaussian noise (AWGN) channel \cite{Polyanskiy2010}, the block error probability changes for different retransmissions. Similar to \cite{Polyanskiy2010}, we assume that decoding errors are independent for different retransmissions. However, the block error probabilities are no longer independent because the Chase combiner output $\sinr$s across $m\in\mathcal{M}$ retransmissions, i.e. $\SINRm$, given in (\ref{chaseSINR}) and (\ref{SINRdefnRayleighFading}) for constant $\snr$ and Rayleigh fading, respectively, are dependent. Therefore, the combined $\sinr$ in (\ref{chaseSINR}), and the product form of $\PfailFBL(\lambda,L,B,M)$ in (\ref{PoutdefinitionFBL}) of Proposition \ref{FBLoutagenofadinggeneralM} are indeed approximations for the FBL regime.

Before we characterize the FBL regime at low and high $\snr$s, we give the following Lemma.

\begin{lem}\label{fconcaveofx}
For $X>0$, $f(n,L,X)$ is monotonically increasing, positive, and concave in $X$.
\end{lem}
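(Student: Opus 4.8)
The plan is to reduce everything to one scalar variable and split $f$ into two pieces whose shape I can read off. Writing $u=1+X$ (an affine, hence concavity- and monotonicity-preserving, change of variable), and using $\log_2(1+X)=\log_2 e\,\ln u$ together with $V=\log_2^2 e\,(1-u^{-2})$, the definition in Lemma~\ref{fconcaveofx} becomes
\[
f(n,L,X)=\sqrt{n}\,\phi(u)+c_0\,\psi(u),\qquad \phi(u)=\frac{u\ln u}{\sqrt{u^2-1}},\quad \psi(u)=\frac{u}{\sqrt{u^2-1}},
\]
with the constant $c_0=\dfrac{0.5\log_2 n-L}{\sqrt{n}\,\log_2 e}$. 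In the regime where the Lemma is used the payload dominates, $L\ge 0.5\log_2 n$, so $c_0\le 0$; I will use this sign throughout. The whole proof then amounts to establishing that $\phi$ is increasing and concave and that $\psi$ is decreasing and convex, after which the two claimed properties follow termwise.

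For monotonicity I would differentiate directly: $\psi'(u)=-(u^2-1)^{-3/2}<0$, so $\psi$ is decreasing and, since $c_0\le 0$, the term $c_0\psi$ is increasing; and a short computation gives $\phi'(u)=(u^2-1)^{-3/2}\big[(u^2-1)-\ln u\big]$, which is positive because $\ln u\le u-1<u^2-1$ for $u>1$. Hence $f$ is a sum of two increasing functions and is increasing. (Equivalently, one can check that $\operatorname{sgn} f'=\operatorname{sgn}\big\{n\log_2 e\,[(1+X)^2-1-\ln(1+X)]+(L-0.5\log_2 n)\big\}$, both summands being nonnegative.)

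The concavity is the substantive step. The convex factor is immediate: $\psi''(u)=3u\,(u^2-1)^{-5/2}>0$, so $c_0\psi$ is concave. For $\phi$ I would compute $\phi''$ and factor out $(u^2-1)^{-5/2}$; the computation collapses to showing $R(u):=u^4-1-3u^2\ln u>0$ for $u>1$. I would prove this one-dimensional inequality by noting $R(1)=0$ and $R'(u)=u\,(4u^2-3-6\ln u)$, whose inner factor equals $1$ at $u=1$ and has derivative $8u-6/u>0$ for $u\ge 1$, hence stays positive; therefore $R'>0$ on $(1,\infty)$ and $R>0$ there. This gives $\phi''<0$, so $\phi$ is concave, and $f=\sqrt n\,\phi+c_0\psi$ is a sum of concave functions, hence concave. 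The main obstacle is exactly this polynomial-versus-logarithm inequality $R(u)>0$; everything else is routine differentiation.

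Finally, positivity requires care, since in fact $f\to-\infty$ as $X\to 0^+$ whenever $L>0.5\log_2 n$, so the claim cannot hold for every $X>0$. I would therefore read ``positive'' as referring to the operating regime $X\ge\Gamma$ in which decoding is attempted: by the definition $\Gamma=C_{n,\varepsilon}^{-1}(L/n)$ one has $f(n,L,\Gamma)=\Q^{-1}(\varepsilon)>0$ for any target error $\varepsilon<\tfrac12$, and then the monotonicity just proved yields $f(n,L,X)\ge \Q^{-1}(\varepsilon)>0$ for all $X\ge\Gamma$, which is the only range relevant to the outage expression~(\ref{PoutdefinitionFBL}).
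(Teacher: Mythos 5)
Your proof is correct, and it takes a genuinely different (and in one respect more complete) route than the paper's. The paper keeps the variable $x=X$ and writes $f(n,L,x)=af_1(x)\left[f_2(x)+b\right]$ with $f_1(x)=(1+x)/\sqrt{x(x+2)}$, $f_2(x)=\log_2(1+x)$, and $b=(0.5\log_2 n-L)/n\le 0$ (the same $n\le 2^{2L}$ assumption you impose via $c_0\le 0$), then differentiates this product twice and argues by signs; you instead substitute $u=1+X$ and split $f$ into the sum $\sqrt{n}\,\phi(u)+c_0\psi(u)$, establishing shape properties termwise. The two arguments meet at the same core inequality: the paper's asserted-but-unproved claim that $\frac{3\log_2(1+x)}{\log_2(e)(x^2+2x)}-1-\frac{1}{(1+x)^2}<0$ for $x\ge 0$ becomes, after clearing denominators with $u=1+x$, exactly your $R(u)=u^4-1-3u^2\ln u>0$; your monotonicity argument ($R(1)=0$, $R'(u)=u(4u^2-3-6\ln u)$ with inner factor positive) supplies the justification the paper omits, so on the one substantive step your write-up is stronger. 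Your handling of positivity also matches the paper's intent: the paper likewise derives $f(n,L,\SINR)\ge \Q^{-1}(\varepsilon)>0$ for $\varepsilon<1/2$ from the requirement $\Cfbl_{n,\varepsilon}(\SINR)\ge L/n$, i.e., positivity only on the decoding-feasible region, and your observation that $f\to-\infty$ as $X\to 0^+$ when $L>0.5\log_2 n$ correctly explains why the claim cannot hold literally for all $X>0$. Incidentally, your termwise decomposition sidesteps a slip in the paper, which remarks that $f_1$ and $f_2$ are both increasing ($f_1$ is in fact decreasing, as its own displayed derivative shows), though the paper's explicit derivative computation does not rely on that remark.
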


\begin{proof}
See Appendix \ref{App:Appendixfconcaveofx}.\qedhere
\end{proof}

\begin{prop}\label{BoundsonQfunction}
Given a random variable $X$, $f(n,L,X)$ satisfies the following relation: 
\begin{align}
\label{JensenonQAndDecreasingQ}
\mathbb{E}[\Q(f(n,L,X))]\geq \Q(\mathbb{E}[f(n,L,X)]) \geq \Q(f(n,L,\mathbb{E}[X])).
\end{align}
\end{prop}

\begin{proof}
See Appendix \ref{App:AppendixBoundsonQfunction}.\qedhere
\end{proof}

We use Prop. \ref{BoundsonQfunction} in Sect. \ref{FBLlowSNR} to derive bounds for the probability of outage in the FBL regime.

The goal of the rest of this section is to figure out what the optimal choices of the number of bins $B$ and the number of retransmissions $M$ should be, and what this gives for the maximum average arrival rate $\lambda$ for the FBL regime. For the proposed random access setting, we solve the throughput optimization problem in (\ref{optimization1}) for the FBL regime to maximize $\lambda$ %with respect to the deadline constraint $T$ 
at high and low $\snr$ regimes, since this is not tractable for general $\snr$ settings as encountered in the IBL regime. The total number of resources $N=TW$ is evenly split into $M$ retransmissions and $B$ bins. 

Let $\Bopt^{\fbl}$, $\Mopt^{\fbl}$, $\Lopt^{\fbl}$ be the optimal number of bins, the optimal number of retransmissions, and the maximum throughput, %allowed arrival rate
respectively, for a fixed deadline $T$, in the FBL regime. We next investigate how $\Bopt^{\fbl}$, $\Mopt^{\fbl}$, $\Lopt^{\fbl}$ scale with the symbol resources 
$N$ at high and low $\snr$.

%%%%%
\subsection{High $\snr$ Regime}%Noise-Limited, M>1
\label{FBLhighSNR}
At high $\snr$, similar to the IBL regime, resource utilization can be optimized by splitting the resources, hence using multiple frequency bins, i.e. $\Bopt^{\fbl}>1$, and Chase combining to aggregate the received signals across multiple (re)transmissions, i.e. $\Mopt^{\fbl}>1$. We determine $\Bopt^{\fbl}$ and $\Mopt^{\fbl}$ by solving the throughput optimization problem for the FBL model in (\ref{optimization1}).

\begin{prop}\label{HighSNRFBLcoverage}
At high $\snr$, for a given device transmission, the probability of outage for the FBL regime can be approximated as
\begin{align}
\label{OutageNoiseLimited}
\PfailFBL(\lambda,L,B,M)\approx D\Big(1,\frac{\lambda_M}{B}\Big)^M\prod\limits_{m=1}^M \Q(f(n,L,\rho m)).
\end{align}
\end{prop}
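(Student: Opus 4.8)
The plan is to reduce the exact outage sum (\ref{PoutdefinitionFBL}) to its single-user-per-bin term by exploiting the high-$\snr$ design. I would read (\ref{PoutdefinitionFBL}) as the $M$-fold sum over arrival realizations $(k_1,\ldots,k_M)$ with each $k_i\geq 1$, weighted by $\prod_{i=1}^M D(k_i,\lambda_M/B)$ and by the block-error factors $\eber(n,L,\sinri)$, where $\sinri$ is the Chase-combined realization (\ref{chaseSINR}) determined by $\mathcal{K}_i$. The driving observation is Prop. \ref{Bopt_generalM_nofading-HighSNR}: at high $\snr$ the design uses $\Bopt=\frac{N}{\Mopt L}\log_2(1+\rho)$ bins, which is large, so the per-bin arrival rate $\lambda_B=\lambda_M/B$ is small. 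I would first record the small-$\lambda_B$ expansion of $D$ in (\ref{Dfunction}): writing $1-e^{-\lambda_B}=\lambda_B(1+O(\lambda_B))$ gives $D(1,\lambda_B)=1-O(\lambda_B)$ and $D(k,\lambda_B)=O(\lambda_B^{\,k-1})$ for $k\geq 2$. Consequently the realization $k_1=\cdots=k_M=1$ is the unique term of leading order in $\prod_{i=1}^M D(k_i,\lambda_B)$, every other realization contributing at least one additional factor of $\lambda_B$.

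I would then evaluate the surviving term explicitly. Setting $k_i=1$ for all $i$ in (\ref{chaseSINR}) makes $\sum_{i=1}^m K_i=m$, so the interference contribution $\rho(\sum_{i=1}^m K_i-m)$ vanishes and $\SINRm=\rho m^2/m=\rho m$. Substituting $\sinri=\rho i$ into the block-error factor (\ref{BlockErrorRate}) gives $\eber(n,L,\rho i)=\Q(f(n,L,\rho i))$, while the arrival weights collapse to $D(1,\lambda_M/B)^M$. Collecting these yields precisely the claimed $D(1,\lambda_M/B)^M\prod_{m=1}^M\Q(f(n,L,\rho m))$.

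To justify dropping the remaining realizations I would bound their total contribution. Since $\eber\leq 1$ and $\sum_{k\geq 1}D(k,\lambda_B)=1$, the combined weight of all realizations with at least one $k_i\geq 2$ is at most $1-D(1,\lambda_B)^M=O(M\lambda_B)$, which vanishes as $B$ grows. This is where the only approximation enters and is, I expect, the delicate point of the argument: the discarded realizations carry stronger interference and hence larger per-attempt block-error factors, so the bound $\eber\leq 1$ is loose and the statement is a genuine leading-order claim in $\lambda_B$ rather than an identity. I would therefore present it as an approximation valid in the small-$\lambda_B$ regime that the high-$\snr$ choice of Prop. \ref{Bopt_generalM_nofading-HighSNR} guarantees, and note that the fixed-point coupling of $\lambda_M$ to $\PfailFBL$ through (\ref{lambdaMvslambdaFBL}) changes nothing at leading order since it merely rescales the already-small $\lambda_B$.
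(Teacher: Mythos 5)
Your proposal is correct and follows essentially the same route as the paper's proof: both reduce the outage sum in (\ref{PoutdefinitionFBL}) to its $k_1=\cdots=k_M=1$ term, for which $\SINRm=\rho m$, giving $D\big(1,\lambda_M/B\big)^M\prod_{m=1}^M\Q(f(n,L,\rho m))$ via the block-error approximation (\ref{BlockErrorRate}). The paper simply asserts $k_m=1$ as a consequence of the high-$\snr$ design, whereas you additionally supply the small-$\lambda_B$ expansion of $D$ and the $O(M\lambda_B)$ bound on the discarded realizations, making explicit the leading-order justification the paper leaves implicit.
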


\begin{proof}
See Appendix \ref{App:Appendix-HighSNRFBLcoverage}.
\end{proof}

%Comparing the FBL regime to the IBL regime, we observe that there will be a decrease in the throughput of the multiuser channel 
We can observe that for the FBL regime, the achievable throughput of the multiuser channel will be lower than the IBL regime when the received $\snr$ per device at the BS is fixed, which is mainly due to the capacity constraint in (\ref{CapacityConstraintGeneralModel}) for the FBL regime. Hence, to achieve the same rate, it is required to increase the received $\snr$ $\rho$ per device at the BS.

%%%%%
\subsection{Low $\snr$ Regime}%M=1 Interference-Limited
\label{FBLlowSNR}
At low $\snr$, since the Chase combiner output $\sinr$ will be low, the capacity constraint in (\ref{CapacityConstraintGeneralModel}) may only be satisfied by choosing a sufficiently large block length $n$. Hence, the devices should share the resources such that $\Bopt^{\fbl}=\Mopt^{\fbl}=1$. %In addition to that, due to the interference we have $\sinr\leq\rho$. 
Given the total number of resources $N$, since the FBL capacity constraint in (\ref{CapacityConstraintGeneralModel}) is stricter than the IBL capacity constraint %$\frac{N}{L}\log_2(1+\sinr)\geq MB$ 
of (\ref{optimization1}), when $\snr$ $\rho$ is such that $\rho<2^{\frac{2L}{N}}-1$ (see Sect. \ref{ShannonLowSNR}, Prop. \ref{lowSNRsufficientcondition}), it is guaranteed that %we have 
$\Bopt^{\fbl}=\Mopt^{\fbl}=1$. 

Given $N$, Prop. \ref{LowSNRFBLcoverage} gives an approximation for the probability of outage for the FBL regime at high $\snr$ up to a maximum of $M$ transmissions.

\begin{prop}\label{LowSNRFBLcoverage}
At low $\snr$, for a given device transmission, the probability of outage for the FBL model can be approximated as
\begin{align}
\label{OutageLowSNRFBL}
\PfailFBL(\lambda,L,B,1)\approx\mathbb{E}_{\mathcal{K}}\big[\Q(f(n,L,\sinrone))\big],
\end{align}
where $\mathcal{K}$ is a random variable that denotes the number of arrivals on the $1^{\rm st}$ attempt given that there is at least one arrival. Its distribution is given by (\ref{Dfunction}).
\end{prop}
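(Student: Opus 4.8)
The plan is to obtain the claim by directly specializing the general FBL outage expression of Prop.~\ref{FBLoutagenofadinggeneralM} to the low-$\snr$ regime and then recognizing the resulting single sum as an expectation. The starting point is the fact, already established in Sect.~\ref{FBLlowSNR} via the sufficient condition of Prop.~\ref{lowSNRsufficientcondition}, that at sufficiently low $\snr$ the resources must be shared, so that $\Bopt^{\fbl}=\Mopt^{\fbl}=1$. I therefore fix $M=1$ and $B=1$ throughout and reduce the multi-slot formula to this degenerate case.

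First I would substitute $M=B=1$ into (\ref{PoutdefinitionFBL}). With no retransmissions, (\ref{lambdaMvslambdaFBL}) gives $\lambda_M=\lambda$, so the Poisson parameter is $\lambda_M/B=\lambda$. The outer sum over $m$ collapses to its single term $m=1$ and the product $\prod_{i=1}^{M}$ reduces to one factor, leaving
\[
\PfailFBL(\lambda,L,1,1)=\sum_{k_1=1}^{\infty} D(k_1,\lambda)\,\eber(n,L,\sinrone),
\]
where $\sinrone$ is the realization of $\SINRone$ fixed by $k_1$ through (\ref{chaseSINR}) at $m=1$, i.e. $\sinrone=\rho/(1+\rho(k_1-1))$. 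A useful observation here is that for $M=1$ the independence-across-retransmissions approximation underlying the product form of (\ref{PoutdefinitionFBL}) never activates, since only one transmission occurs; the sole remaining approximation is the block-error-rate one.

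Next I would apply the asymptotic block-error approximation (\ref{BlockErrorRate}), $\eber(n,L,\SINR)\approx\Q(f(n,L,\SINR))$, to replace $\eber(n,L,\sinrone)$ by $\Q(f(n,L,\sinrone))$. The final step is to identify $D(k_1,\lambda)$ from (\ref{Dfunction}) (evaluated at $\lambda_B=\lambda$ because $B=1$) as exactly the probability mass function of $\mathcal{K}$, the number of first-attempt arrivals conditioned on at least one arrival. Under this identification the sum over $k_1$ is, by definition, $\mathbb{E}_{\mathcal{K}}[\Q(f(n,L,\sinrone))]$, which is the asserted expression.

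I do not expect a genuine computational obstacle: once $\Bopt^{\fbl}=\Mopt^{\fbl}=1$ is granted, the derivation is bookkeeping plus the recognition of the conditional Poisson law as the distribution of $\mathcal{K}$. The only point requiring care -- and essentially the only place any content resides -- is the justification that sharing is optimal at low $\snr$, and that is imported from Prop.~\ref{lowSNRsufficientcondition} and the surrounding discussion rather than re-derived here. This bound is precisely the object Prop.~\ref{BoundsonQfunction} is later used to estimate.
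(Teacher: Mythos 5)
Your proposal is correct and takes essentially the same route as the paper: the paper's proof likewise reduces to the single-transmission case, writes $\PfailFBL(\lambda,L,B,1)=\sum_{k\geq 1} D\big(k,\frac{\lambda}{B}\big)\,\eber(n,L,\sinrone)$, recognizes this sum as $\mathbb{E}_{\mathcal{K}}\big[\eber(n,L,\SINRone)\big]$ with $\mathcal{K}$ distributed per (\ref{Dfunction}), and then invokes the approximation (\ref{BlockErrorRate}) to replace $\eber$ by $\Q(f(\cdot))$. The only (immaterial) difference is that you fix $B=1$, whereas the statement and the paper's proof keep general $B$ with Poisson parameter $\lambda/B$; your bookkeeping goes through verbatim in that case.
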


\begin{table*}[t!]
\begin{center}
\setlength{\extrarowheight}{10pt}
\begin{tabular}{| c | }
\hline
{\bf Low $\snr$/Constant $\snr$}  \\
$\Lopt^{\fbl}=\Big\{\lambda \Big\vert\delta=\sum\limits_{k\geq 1} D\big(k,\frac{\lambda}{B}\big)\eber(n,L,\sinrone)\Big\}$, $\Bopt^{\fbl}=\Mopt^{\fbl}=1^*$ \\
\hline
{\bf High $\snr$/Constant $\snr$} \\
$\Lopt^{\fbl}=\lambda_{\Mopt}^{\fbl}/\Big[1+\sum\limits_{m=1}^{\Mopt^{\fbl}-1}\PfailFBL(\lambda,L,B,m)\Big]$ \\
$\Mopt^{\fbl}=\underset{M \geq 1}{\arg\max}\,\, \lambda_M^{\fbl}/\Big[1+\sum\limits_{m=1}^{M-1}\PfailFBL(\lambda,L,B,m)\Big]$ \\
$\Bopt^{\fbl}=\Big\{  B\Big\vert \max\limits_{B\vert n=\frac{N}{B{\Mopt^{\fbl}}}} \, \bigcap\limits_{m=1}^{\Mopt^{\fbl}} {\frac{L}{n} \leq \Cfbl_{\big(\frac{N}{B m}\big),\big(\delta^{1/{\Mopt^{\fbl}}}\big)}\left(m\rho\right)} \Big\}$ \\
\hline
\end{tabular}
\end{center}
\caption{Scaling results for constant $\snr$ in the FBL regime.}
\label{table:tab4}
\end{table*}

\begin{proof}
At low $\snr$, for the FBL regime, the probability of outage can be derived as
\begin{align}
\PfailFBL(\lambda,L,B,1)=\sum\limits_{k=1}^{\infty}D\Big(k,\frac{\lambda}{B}\Big)\eber(n,L,\sinrone)=\mathbb{E}_{\mathcal{K}}\big[\eber(n,L,\SINRone)\big],\nonumber
\end{align}
where the final result follows from the approximation in (\ref{BlockErrorRate}).
\end{proof}

We next provide a lower and upper bound on the probability of outage at low $\snr$ for FBL.
 
\begin{prop}\label{LowSNRFBLcoverageBounds}
At low $\snr$, for a given device transmission, the probability of outage for the FBL model can be lower and upper bounded as follows
\begin{align}
\label{OutageInterferenceLimited}
\Q(f(n,L,\mathbb{E}_{\mathcal{K}}[\SINRone]))\leq\PfailFBL(\lambda,L,B,1)%,\nonumber\\
%\PfailFBL(\lambda,L,B,1)
\leq\sum\limits_{k=1}^{\infty}D\Big(k,\frac{\lambda}{B}\Big)\exp{(-f(n,L,\sinrone)^2)}.
\end{align}
\end{prop}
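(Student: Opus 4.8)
The plan is to bound termwise the series representation of $\PfailFBL(\lambda,L,B,1)$ already established in Prop.~\ref{LowSNRFBLcoverage}, treating the lower and upper bounds separately. Expanding the expectation over the number of contending devices $\mathcal{K}$, whose conditional pmf is $D(\cdot,\lambda/B)$ from (\ref{Dfunction}), the starting point is
\begin{align}
\PfailFBL(\lambda,L,B,1)=\mathbb{E}_{\mathcal{K}}\big[\Q(f(n,L,\SINRone))\big]=\sum_{k\geq1}D\Big(k,\frac{\lambda}{B}\Big)\Q\big(f(n,L,\sinrone)\big),\nonumber
\end{align}
where for a realization of $k$ contending devices at constant $\snr$ the argument is the single-slot SINR $\sinrone=\rho/(1+\rho(k-1))$ obtained from (\ref{chaseSINR}) with $m=1$. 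The lower bound will come from a Jensen-type inequality on $\Q$, and the upper bound from a Chernoff-type tail bound applied inside the sum.

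For the lower bound I would invoke Prop.~\ref{BoundsonQfunction} directly with $X=\SINRone$. That proposition—whose proof combines the concavity and monotonicity of $f$ in its SINR argument (Lemma~\ref{fconcaveofx}) with the convexity and monotone decrease of $\Q$ on the positive axis—gives the chain $\mathbb{E}[\Q(f(n,L,X))]\geq\Q(\mathbb{E}[f(n,L,X)])\geq\Q(f(n,L,\mathbb{E}[X]))$. Reading off the outermost inequality with $X=\SINRone$ yields $\PfailFBL(\lambda,L,B,1)\geq\Q\big(f(n,L,\mathbb{E}_{\mathcal{K}}[\SINRone])\big)$, exactly the claimed lower bound. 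The only prerequisite is that $\SINRone>0$ almost surely so Lemma~\ref{fconcaveofx} applies, which holds since every realization has $\sinrone=\rho/(1+\rho(k-1))>0$.

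For the upper bound I would bound each Gaussian-tail term by an exponential. Since $\sinrone>0$, Lemma~\ref{fconcaveofx} guarantees $f(n,L,\sinrone)>0$, so each $\Q$ is evaluated on the positive axis where a Chernoff-type bound $\Q(x)\leq\exp(-x^2)$ is available. Applying it to every summand and using $D(k,\lambda/B)\geq0$ to preserve the inequality under the nonnegative termwise sum gives $\PfailFBL(\lambda,L,B,1)\leq\sum_{k\geq1}D(k,\lambda/B)\exp\big(-f(n,L,\sinrone)^2\big)$, the stated upper bound.

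The main obstacle is this upper bound, specifically pinning down the exact exponential tail bound on $\Q$ and checking it holds uniformly over the arguments $f(n,L,\sinrone)$ that actually occur at low $\snr$. The clean Chernoff estimate is $\Q(x)\leq\tfrac12 e^{-x^2/2}$ for $x\geq0$, and one must confirm that the slightly looser form in the statement is valid on the relevant range: it is whenever $f$ stays below $\sqrt{2\ln2}$, since there $\tfrac12 e^{-x^2/2}\leq e^{-x^2}$, and this is consistent with the small positive dispersion-scaled arguments produced by the low-SINR regime. The positivity furnished by Lemma~\ref{fconcaveofx} is what keeps us on the convex, monotone-decreasing branch of $\Q$ throughout; everything else—interchanging the expectation with the $D(\cdot,\lambda/B)$-weighted sum and preserving the inequalities termwise—is routine.
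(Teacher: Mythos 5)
Your argument is essentially the paper's own proof: the lower bound is obtained exactly as in the paper by applying Prop.~\ref{BoundsonQfunction} with $X=\SINRone$ to the representation of $\PfailFBL(\lambda,L,B,1)$ from Prop.~\ref{LowSNRFBLcoverage}, and the upper bound is the same termwise Chernoff estimate pushed under the $D(\cdot,\lambda/B)$-weighted sum. The exponent discrepancy you flagged is in fact an inconsistency in the paper itself---its proof cites $\Q(x)\leq e^{-x^2/2}$ for $x>0$, derives $\mathbb{E}_{\mathcal{K}}\big[\exp\big(-\tfrac{1}{2}f(n,L,\SINRone)^2\big)\big]$, and then silently rewrites this as $\sum_{k\geq 1}D\big(k,\tfrac{\lambda}{B}\big)\exp\big(-f(n,L,\sinrone)^2\big)$, dropping the factor $\tfrac{1}{2}$---so your restriction to arguments below $\sqrt{2\ln 2}$ (where $\tfrac{1}{2}e^{-x^2/2}\leq e^{-x^2}$) is precisely the patch the paper omits, the alternative fix being to restore the $\tfrac{1}{2}$ in the exponent of the stated bound so that it holds unconditionally for positive arguments.
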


\begin{proof}
See Appendix \ref{App:Appendix-LowSNRFBLcoverageBounds}.
\end{proof}

At low $\snr$, similar to the IBL regime, all devices should share the resources to maximize the throughput. %average number of arrivals that can be supported. 
We illustrate the scaling of parameters $\Lopt^{\fbl}$, $\Bopt^{\fbl}$ and $\Mopt^{\fbl}$ for constant $\snr$ at high and low $\snr$ in the FBL regime in Table \ref{table:tab4}. These results can be extended to characterize the scaling of throughput versus latency for Rayleigh fading case, which is left as future work.

%%%%%%%%%%%%%%%%%%%%%%%%%%%%%%%%%%%%%%%%%
\section{Numerical Simulations}
\label{Simulations}
We investigate the scaling of the throughput $\Lopt$ in (\ref{optimization1}), with respect to $W$, $\rho$ and $T$ for constant $\snr$ and Rayleigh fading cases. For numerical simulations, we use fixed-point iteration to determine the relation between $\lambda_M$ and $\lambda$ given in (\ref{lambdaMvslambda}). In the plots, solid and dashed curves denote the analytical approximations for the scaling results, while numerical results are squares.

\begin{figure*}[t!]
\centering
\begin{minipage}[t]{.49\textwidth}
\centering
\includegraphics[width=\textwidth]{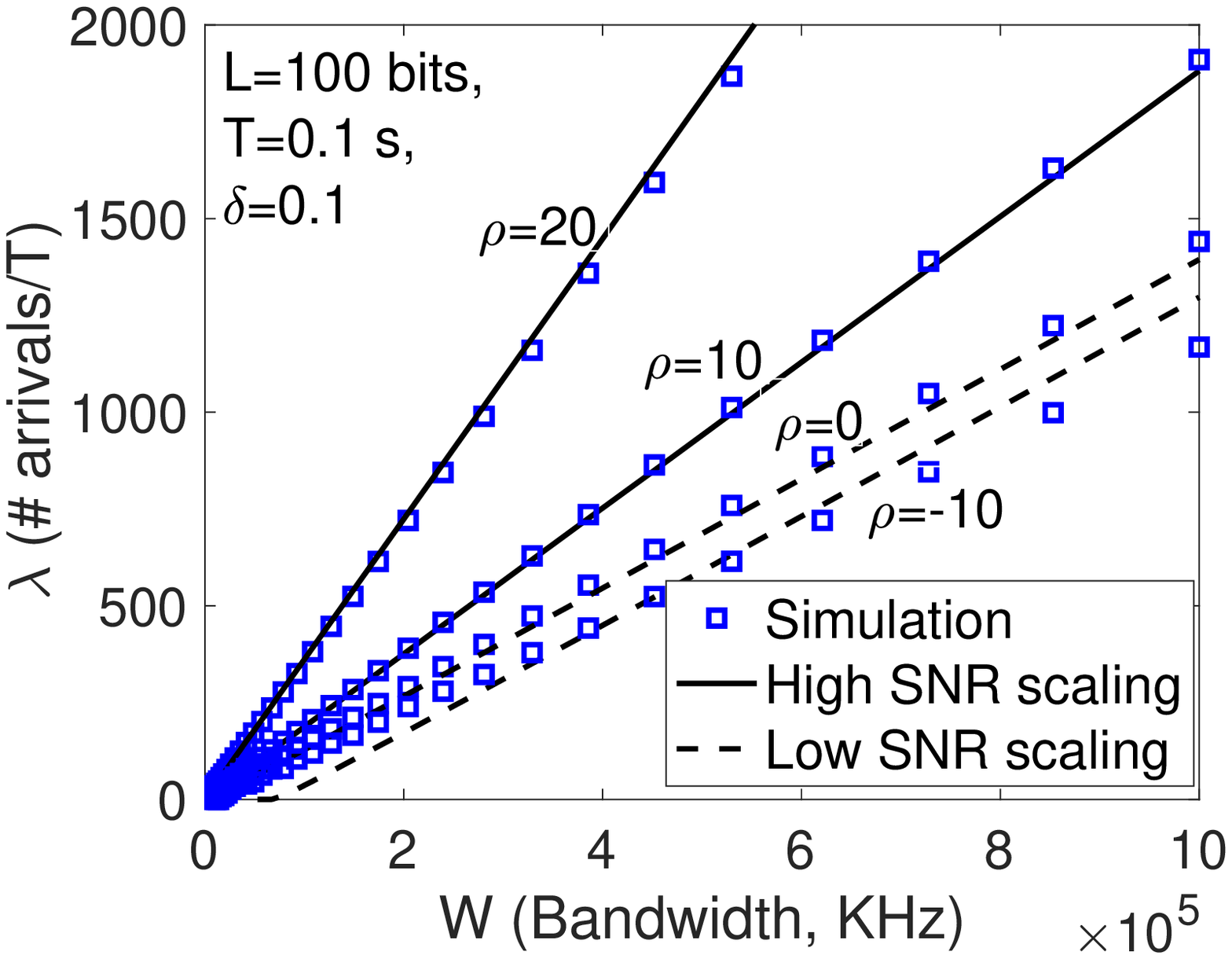}\\
\caption{\small{$\Lopt$ vs. bandwidth $W$ (kHz) for different \\$\rho$ (dB) with constant $\snr$, in the IBL regime.}\label{ThroughputvsBandwidth}}
\end{minipage}
\begin{minipage}[t]{.49\textwidth}
\centering
\includegraphics[width=\columnwidth]{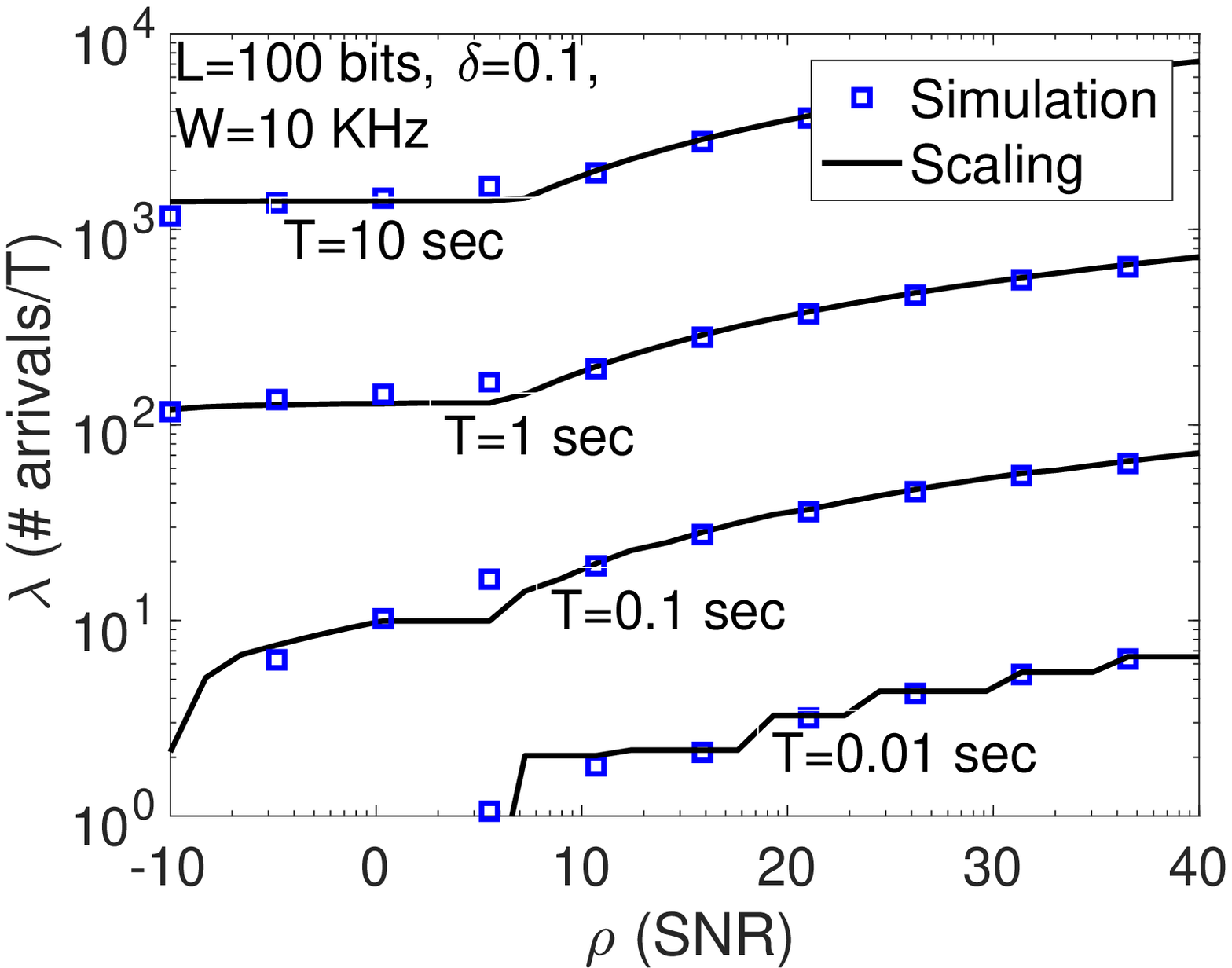}
\caption{\small{$\Lopt$ vs. $\snr$ $\rho$ (dB) for different delay constraints $T$ (s) with constant $\snr$, in the IBL regime.}\label{ThroughputvsRho}}
\end{minipage}
\end{figure*}

%%%%%%%%%%%%%%%%%%%%%%%%%%%%%%%%%%%%%%%%%
{\bf Throughput scaling with respect to bandwidth $W$.} At high $\snr$, there is a linear relationship between $\Lopt$ and $W$ because from Prop. \ref{Bopt_generalM_nofading-HighSNR}, $\Lopt/B$ is fixed for given $\delta$ and $B$ linearly scales with $W$. Given the fixed parameters $L$, $T$, and $\delta$, the slope $\Lopt/W$ at high $\snr$ as a function of $\rho$ can be found using Table \ref{table:tab3}. At high $\snr$, the capacity can be approximated as $C(\SINR)\approx 0.332 \cdot \SINR \,(\text{dB})$, where $\SINR \,(\text{dB})=10\log_{10}(\SINR)$. This yields
\begin{align}
\label{highSNRthroughputapprox}
\frac{\Lopt}{W}%=\alpha_{\Mopt}(\delta)\frac{T}{\Mopt L}\log_2(1+\rho)\Big(\frac{1-\delta^{1/\Mopt}}{1-\delta}\Big)\nonumber\\
\approx 0.332\alpha_{\Mopt}(\delta)\frac{T}{\Mopt L}\Big(\frac{1-\delta^{1/\Mopt}}{1-\delta}\Big)\cdot\rho\,(\text{dB}),
\end{align}
where $\rho \,(\text{dB})=10\log_{10}(\rho)$. 
At low $\snr$, since $\Bopt=\Mopt=1$, using $\Gamma=2^{\frac{L}{N}}-1\approx \frac{L}{N}\log(2)$, we obtain $\kopt_1\approx \lfloor 1+\frac{N}{L \log(2)}-\frac{1}{\rho}\rfloor$. From this approximation and (\ref{Lambdamax_approx_low_SNR}), we observe that $\Lopt$ scales sublinearly in $W$. Given $\rho$, %due to the granularity of $\kopt_1$, 
the minimum amount of increase in $W$ that yields a change in the value of $\kopt_1$ hence $\Lopt$ is $\Delta W=\frac{L}{T}\log(2)$. Hence, as $T$ gets smaller or $L$ gets larger, it is not possible to achieve a linear scaling between $\Lopt$ and $W$. However, at low $\snr$, if $\delta$ is sufficiently large, %i.e. ${\Q}^{-1}(\delta)$ is very small, 
$\Lopt\approx \kopt_1$. The trend of $\Lopt$ versus $W$ is illustrated in Fig. \ref{ThroughputvsBandwidth}.

\begin{figure*}[t!]
\centering
\begin{minipage}[t]{.49\textwidth}
\centering
\includegraphics[width=\textwidth]{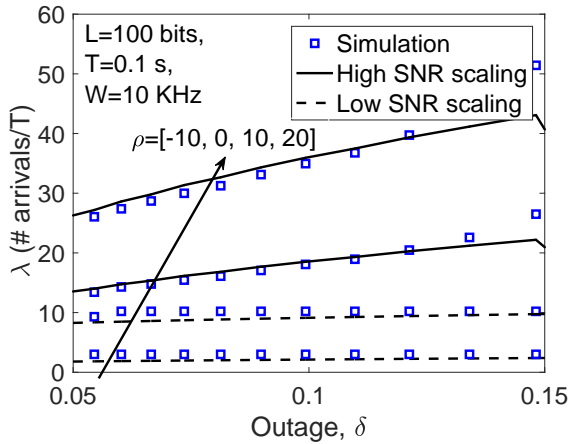}
\caption{\small{$\Lopt$ vs. outage constraint $\delta$ for different $\rho$ (dB).}}
\label{ThroughputvsOutage}
\end{minipage}
\begin{minipage}[t]{.49\textwidth}
\centering
\includegraphics[width=\textwidth]{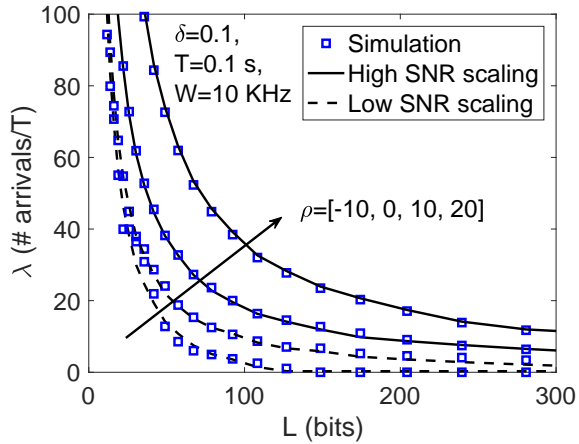}
\caption{\small{$\Lopt$ vs. payload $L$ (bits) for different $\rho$ (dB).}\label{ThroughputvsPayload}}
\end{minipage}
\end{figure*}

%%%%%%%%%%%%%%%%%%%%%%%%%%%%%%%%%%%%%%%%%
{\bf Throughput scaling with respect to $\bfsnr$ $\rho$.} Using the high $\snr$ approximation, and the expression for $\Lopt$ in Table \ref{table:tab3}, we obtain (\ref{highSNRthroughputapprox}). For different latency constraints ranging between $T=10$ msec and $T=10$ sec, we illustrate the trend of $\Lopt$ versus $\rho$ in Fig. \ref{ThroughputvsRho}.

%%%%%%%%%%%%%%%%%%%%%%%%%%%%%%%%%%%%%%%%%
{\bf Throughput scaling with respect to target outage rate $\delta$.} We observe the tradeoff between the throughput and latency from (\ref{optimization1}) such that $T$ is lower bounded as $T\geq \frac{LBM}{WC(\SINRM)}$. Note that $\Lopt$ increases with the target outage rate $\delta$. Large $\delta$ implies a rate constrained system and there are a lot of devices per bin. Small $\delta$ implies an $\sinr$ constrained model with fewer number of devices per bin. In Fig. \ref{ThroughputvsOutage}, we plot the trend of $\Lopt$ with respect to $\delta$ for different $\rho$ (dB). At low $\snr$, $\Bopt=\Mopt=1$, and $\Lopt$ increases with $\kopt$. However, $\kopt$ is bounded by the target outage rate in (\ref{optimization1}) (see Table \ref{table:tab3}), using which $\Lopt$ can be obtained. At high $\snr$, $\Bopt, \Mopt\geq1$, and $\kopt=1$, $\Lopt$ changes linearly in $B$, and is computed using the rate constraint in (\ref{optimization1}).

%%%%%%%%%%%%%%%%%%%%%%%%%%%%%%%%%%%%%%%%%
{\bf Throughput scaling with respect to packet size $L$.}
In Fig. \ref{ThroughputvsPayload}, we illustrate the trend with respect to $L$ for different $\rho$ (dB), $\delta=0.1$, deadline $T=0.1$ sec, and bandwidth $W=10$ KHz. It is intuitive that the maximum average supportable rate $\Lopt$ decays in payload size $L$. At low $\snr$, since $\Gamma$ increases in $L$, to satisfy a given outage constraint in (\ref{optimization1}), $\Lopt$ decays in $L$. Similarly, at high $\snr$, from Table \ref{table:tab3}, $\Bopt$ is inversely proportional to $L$ and from
\begin{align}
\Lopt\approx\alpha_{\Mopt}(\delta)\Bopt{(1-\delta^{1/\Mopt})}\big/{(1-\delta)},\nonumber
\end{align}
which justifies the inversely proportional relation between $\Lopt$ and $L$. We can also observe that $\Bopt$ and $\Mopt$ decrease in $L$ due to the capacity constraint in (\ref{optimization1}).

\begin{figure*}[t!]
\centering
\begin{minipage}[t]{.49\textwidth}
\centering
\includegraphics[width=\textwidth]{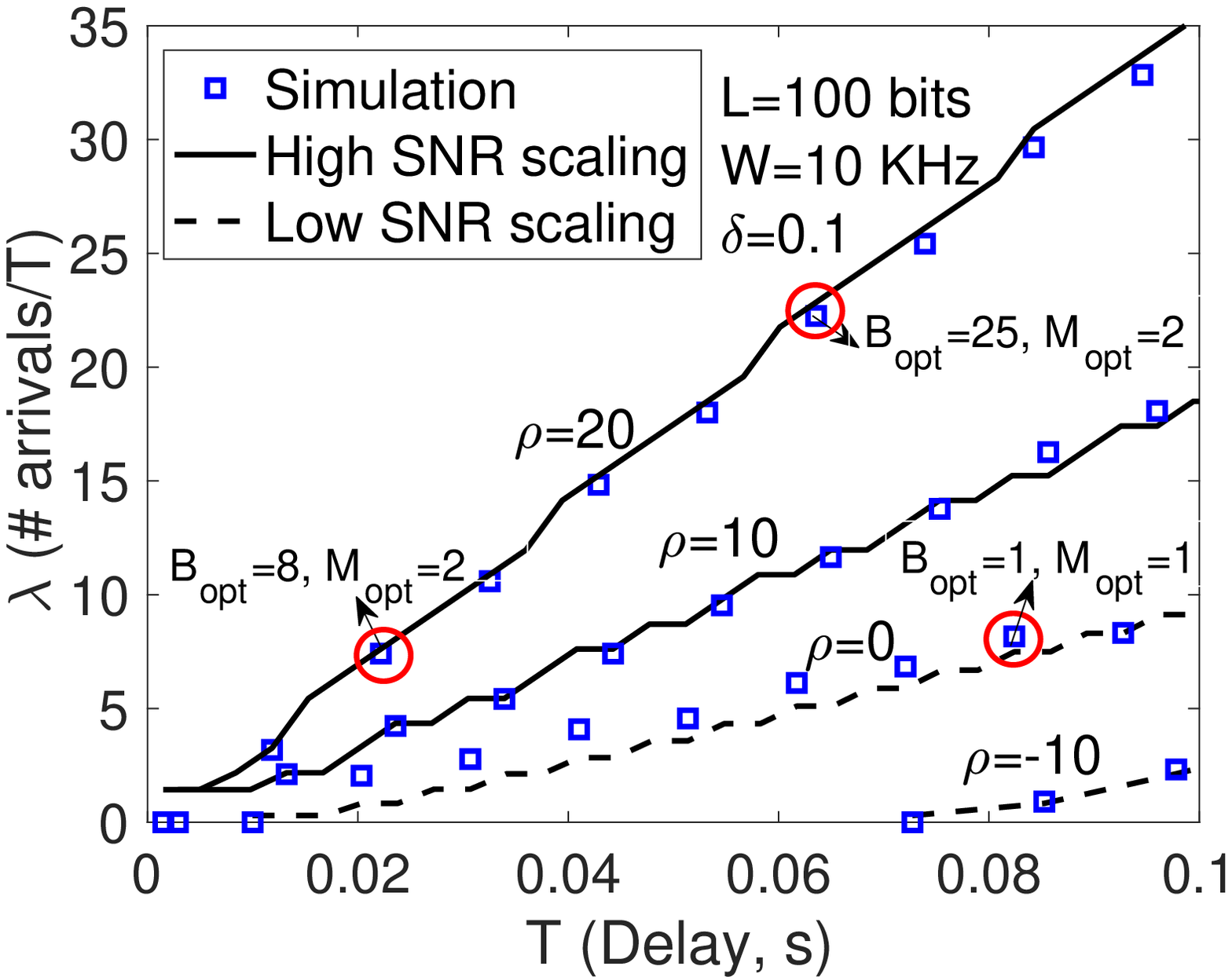}\\
\small{(a) Constant $\snr$.}
\end{minipage}
\begin{minipage}[t]{.49\textwidth}
\centering
\includegraphics[width=\textwidth]{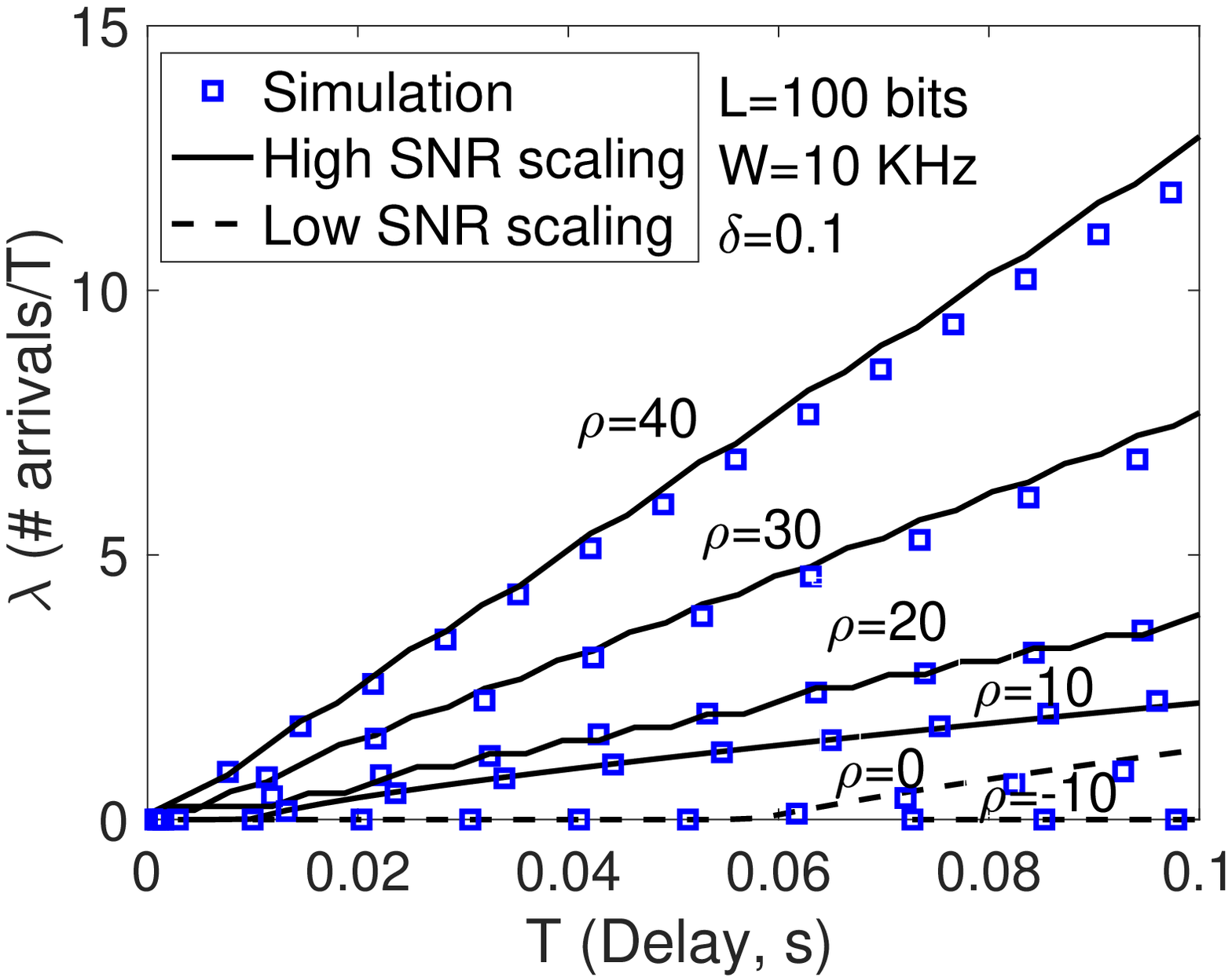}
\small{(b) Rayleigh fading with $\mu=1$ \cite{MaHuAnd2017ICC}.}
\end{minipage}
\caption{\small{$\Lopt$ vs. delay constraint $T$ (s) for different $\rho$ (dB) in the IBL regime.}\label{ThroughputvsDelay}}
\end{figure*}

\begin{figure*}[t!]
\centering
\begin{minipage}[t]{.49\textwidth}
\centering
\includegraphics[width=\textwidth]{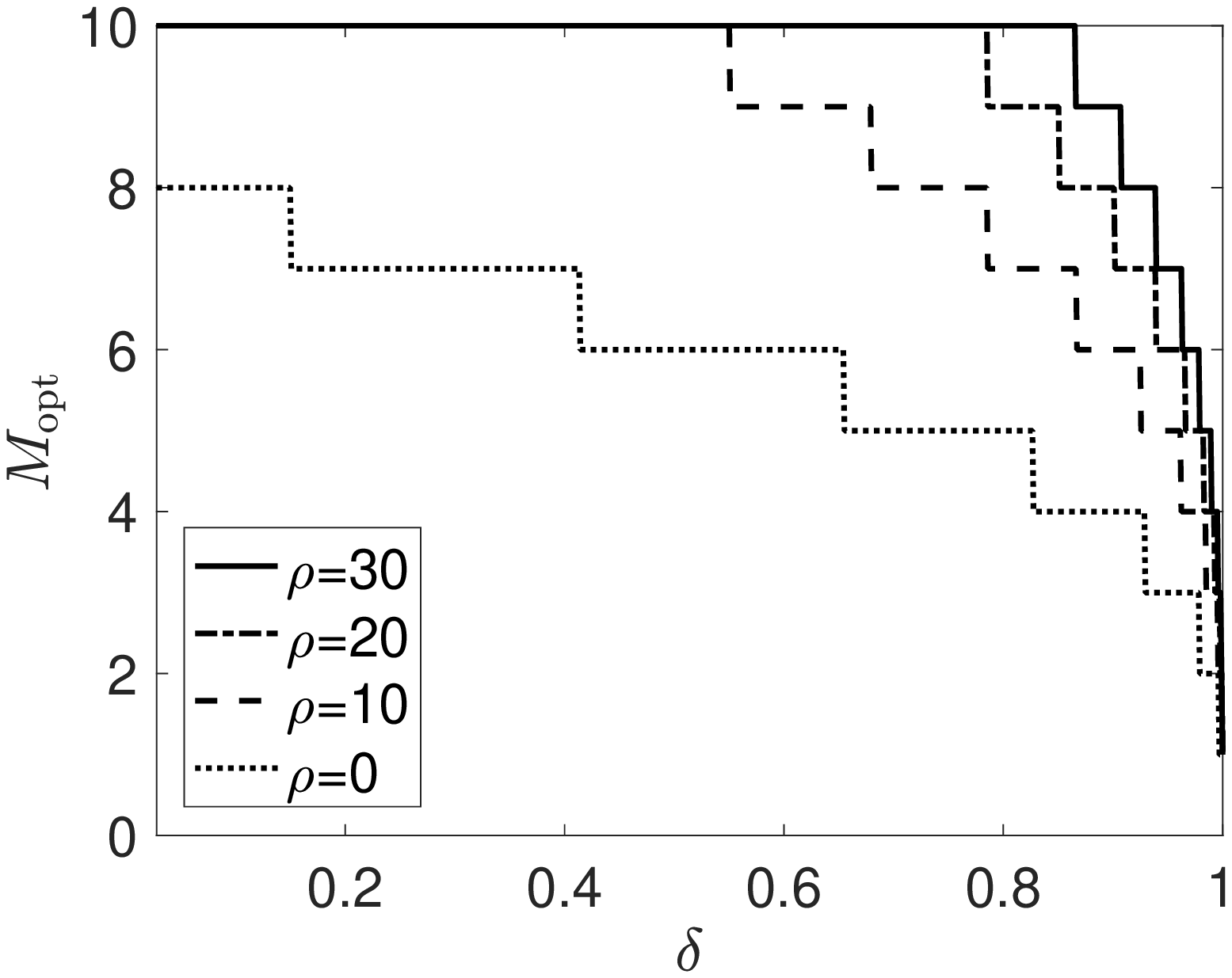}\\
\caption{\small{$\Mopt$ versus $\delta$ for different $\rho$ (dB) at high SNR.}\label{Moptvsdelta}}
\end{minipage}
\begin{minipage}[t]{.49\textwidth}
\centering
\includegraphics[width=\textwidth]{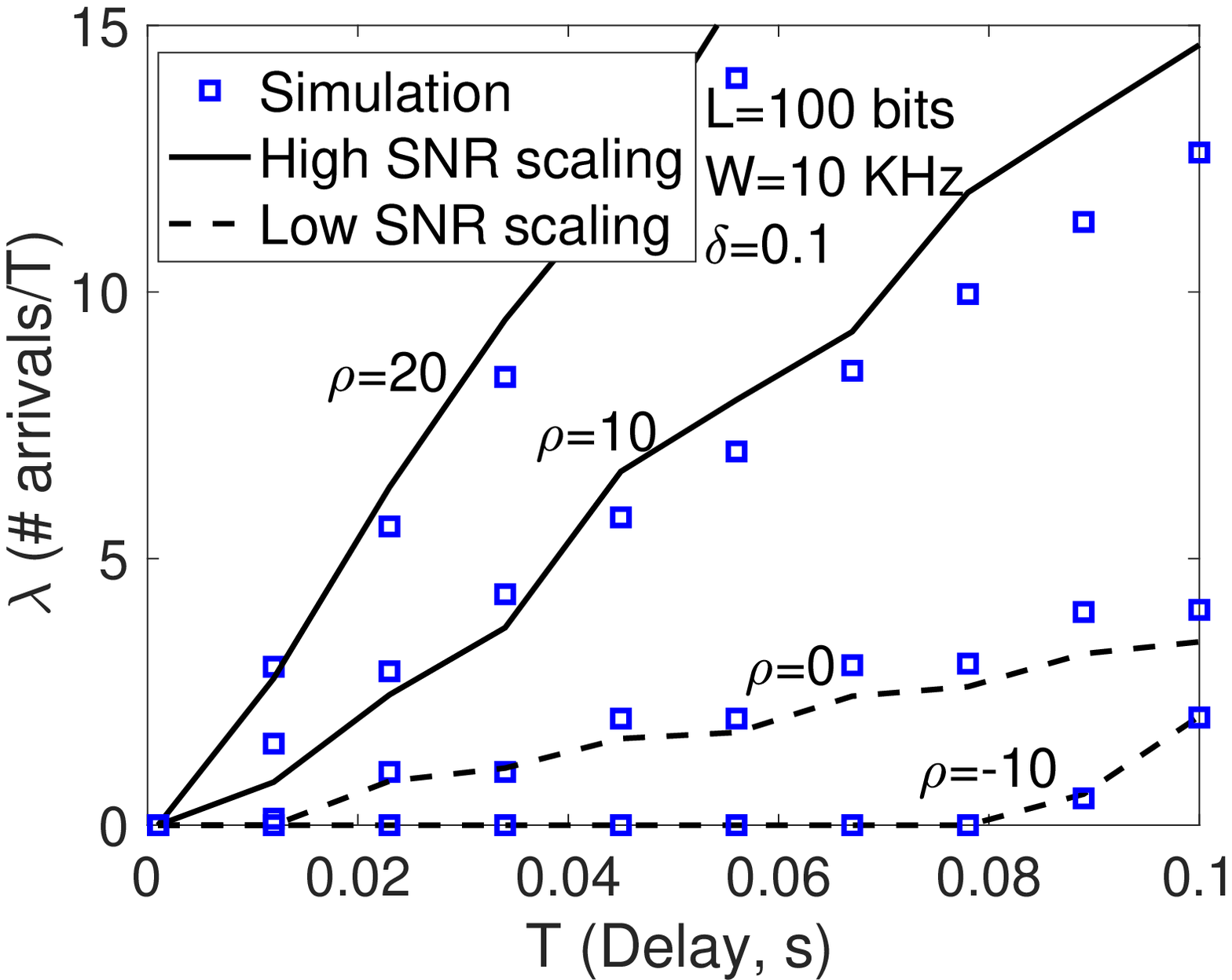}
\caption{\small{$\Lopt$ vs. delay constraint $T$ (s) at constant $\snr$ for different $\rho$ (dB) in the FBL regime.}\label{ThroughputvsDelayFBL}}%SINGLE CELL MODEL
\end{minipage}
\end{figure*}

%%%%%%%%%%%%%%%%%%%%%%%%%%%%%%%%%%%%%%%%%  
{\bf Throughput scaling with respect to latency constraint $T$.} At high $\snr$, the trend of $\Lopt$ versus $T$ %(or $N$) 
is similar to trend between $\Lopt$ and $W$. At low $\snr$, we observe that $\kopt$ changes when $T$ is a multiple of $\frac{L}{W}\log(2)$. Hence, it may be approximated as linear for sufficiently large $W$. We plot the variation of $\Lopt$ with $T$ in Fig. \ref{ThroughputvsDelay}-(a), for different $\rho$ (dB). In the same figure, we also show the optimal values of $B$ and $M$ for a few points along the curves. In Fig. \ref{Moptvsdelta}, we illustrate $\Mopt$, which is a non-increasing function of $\delta$, for different $\rho$ (dB) at constant $\snr$. 

Although the variability of the channel causes a drop in the number of arrivals that can successfully complete the random access phase, in Fig. \ref{ThroughputvsDelay}-(b) we showed that similar conclusions in the constant $\snr$ case extend to the case of Rayleigh fading. 

Finally, we consider the $\Lopt$ versus $T$ trend of the FBL regime at constant $\snr$, and compare it to the IBL case, which is shown in Fig. \ref{ThroughputvsDelayFBL}. For a given $L$, $T$, and $W$, the IBL throughput is an upper bound to the FBL throughput. This is mainly due to the stricter capacity constraint of the FBL model. %, given in (\ref{optimization2}). 
Note from (\ref{dispersion}) that $V(\SINR)$ increases in $\sinr$. Therefore, from (\ref{FBLrate}), we infer that given a block length $n$ and block error rate $\varepsilon$, the gap between the maximum achievable rate for the FBL regime and the Shannon capacity increases in $\rho$. 

%Denote by $\Gfbl(n,\SINR,\varepsilon)=\log_2(1+\SINR)-\Cfbl_{n,\varepsilon}(\SINR)$, the gap between the maximum achievable rate for the FBL regime and the Shannon capacity. Note from (\ref{dispersion}) that $V(\SINR)$ increases with $\sinr$. From (\ref{FBLrate}), we infer that as the block length $n$ approaches $\infty$, the gap $\Gfbl$ decreases to $0$. 

%%%%%%%%%%%%%%%%%%%%%%%%%%%%%%%%%%%%%%%%%
%%%%%%%%%%%%%%%%%%%%%%%%%%%%%%%%%%%%%%%%%
\section{Conclusions}
\label{conc}
We proposed a random access model for a single cell uplink where the objective is to characterize the fundamental limits of supported arrival rate by optimizing the number of retransmissions and resource allocation under a maximum latency constraint. We evaluate the performance of the model with respect to total bandwidth, latency constraint, payload, outage constraint, and $\snr$. We obtain the following design insights for the  infinite block length (IBL) and finite block length (FBL) models for different $\snr$, with different packet sizes and outage constraints:
\begin{itemize}
\item Independent of the payload size $L$, at low $\snr$, the resources should be shared, implying small $\Bopt$, $\Mopt$. As $\snr$ increases, the resources should be split, yielding large $\Bopt$, $\Mopt$.
\item As the target $\sinr$ outage rate $\delta$ %the outage constraint $\delta$ 
decreases, $\Bopt$ and $\Mopt$ increase.
\end{itemize}

The insights can be applied to the design of narrowband IoT systems for 5G with ultra-reliability, ultra-low latency, and a large number of connected devices. Possible extensions include the modeling and analysis of the random access in a cellular network setting, by capturing the scaling of throughput using realistic power control and fading models, and characterizing the uplink interference. They also include the analysis of heterogeneous random access %strategies 
to accommodate a wide range of device-chosen strategies, %with different 
capabilities and requirements. The overhead caused by channel-estimation is significant and different diversity combining or multiplexing techniques can improve the system performance, especially when there is deep fading. Another direction would be to incorporate the FBL model in a cellular network setting, and characterize the finite block-error probability by capturing the uplink interference at low and high contention.

%%%%%%%%%%%%%%%%%%%%%%%%%%%%%%%%%%%%%%%%%
\begin{appendix}
\section{Appendices}
%%%%%%%%%%%%%%%%%%%%%%%%%%%%%%%%%%%%%%%%%
\subsection{Proof of Proposition \ref{chase}} \label{App:Appendix-chase}
In general, if the received signal vector during transmission $i\in\mathcal{M}$ is $\mathbf{r}_{il}=a_{il} \mathbf{s}_l+\mathbf{n}_i+\mathbf{z}_{il}$, where $l\in\{1,\hdots, K_i\}$ is the index of the typical device, $\mathbf{s}_l\in \mathbb{C}^n$ is the desired signal, $a_{il}$ is the complex amplitude, $\mathbf{n}_i$ is an $n$-dimensional complex Gaussian noise vector with zero mean and covariance $\sigma_i^2\mathbf{I}_n$, and $\mathbf{z}_{il}=\sum\nolimits_{k=1, k\neq l}^{K_i} a_{ik} \mathbf{s}_k$ is the interference given that there are $K_i\geq 1$ devices transmitting in TFS $i$. Chase combining of $m$ transmissions results in
\begin{align*}
\sum\nolimits_{i=1}^m a_{il}^* \mathbf{r}_{il}%&= \sum\nolimits_{i=1}^M a_{il}^*(a_{il} \mathbf{s}_l+\mathbf{n}_i+\mathbf{z}_{il})\\
%&
=\sum\nolimits_{i=1}^m |a_{il}|^2 \mathbf{s}_l   +  \sum\nolimits_{i=1}^m a_{il}^*( \mathbf{n}_i+\mathbf{z}_{il}).
\end{align*}
The signal power is $P_S=\big( \sum\nolimits_{i=1}^m |a_{il}|^2 \mathbf{s}_l \big)^2=\big(\sum\nolimits_{i=1}^m |a_{il}|^2\big)^2$. The noise power by TIN is
\begin{align*}
P_N&=\Big|\sum\nolimits_{i=1}^m  a_{il}^*( \mathbf{n}_i+\mathbf{z}_{il})\Big|^2=\sum\nolimits_{i=1}^m |a_{il}|^2 \sigma_i^2+\Big|\sum\nolimits_{i=1}^m a_{il}^*\mathbf{z}_{il}\Big|^2\\
&=\sum\nolimits_{i=1}^m |a_{il}|^2 \sigma_i^2+\Big|\sum\nolimits_{i=1}^m a_{il}^* \sum\nolimits_{k=1, k\neq l}^{K_i} a_{ik} \mathbf{s}_k\Big|^2\\
&=\sum\nolimits_{i=1}^m |a_{il}|^2 \sigma_i^2+\sum\nolimits_{i=1}^m \sum\nolimits_{k=1, k\neq l}^{K_i} |a_{il}|^2 |a_{ik}|^2.
\end{align*}
If the $\snr$ per user $|a_{il}|^2/\sigma_i^2=\rho$ and $\sigma_i^2=\sigma^2$ for all $i$ and $l$, then the $\snr$ at the output of the Chase combiner by treating interference as noise (TIN) as a result of $m$ transmissions is  
\begin{align}
\label{chaseSNR}
\SINR=\frac{\big(\sum\nolimits_{i=1}^m |a_{il}|^2\big)^2}{\sum\nolimits_{i=1}^m |a_{il}|^2 \sigma_i^2+\sum\nolimits_{i=1}^m \sum\nolimits_{k=1, k\neq l}^{K_i} |a_{il}|^2 |a_{ik}|^2}.%\\
%&=\frac{(a^2 M)^2}{Ma^2\sigma^2+a^4\big(\sum\nolimits_{i=1}^M (k_i-1)\big)}
%=\frac{\rho M^2}{M+\rho\big(\sum\nolimits_{i=1}^M k_i-M\big)}.
\end{align}

Letting $|a_{ik}|=\sqrt{ \rho\sigma^2}$ be the signal amplitude of each device $k=1,\hdots, K_i$ on the $i^{\rm th}$ attempt, the $\snr$ per device at the BS will be $\rho$, in a given resource bin. Then at constant $\snr$, the $\sinr$ during transmission $m$ is $\rho/(1+(K_i-1)\rho)$ if there are $K_i\geq 1$ devices arriving. Hence, the Chase combiner output $\sinr$ of the typical device as a result of $m\in\mathcal{M}$ transmissions, i.e. $\SINRm$ as a function of $\mathcal{K}_m$, $m\in\mathcal{M}$, is given as
\begin{align}
\SINRm=\frac{\rho m^2}{m+\rho \big(\sum\limits_{i=1}^m K_i-m\big)}, \quad m\in\mathcal{M}. \nonumber    
\end{align}

In the case of Rayleigh fading, let $|a_{il}|=\sqrt{ h_{l_i} \rho\sigma^2}$ be the received signal amplitude of each device at retransmission attempt $i$, where the typical device has index $l=1$, and $l\in\{2,\hdots, K_i\}$ is the interferer index, and $h_{l_i} \sim \exp(1)$ for all $l\in\{1,\hdots, K_i\}$. Hence, the Chase combiner output $\SINRm$, which is a function of the channel realizations and $\mathcal{K}_m$, $m\in\mathcal{M}$, is given as
\begin{align}
\SINRm=\frac{\rho\big(\sum\nolimits_{i=1}^m h_{1_i} \big)^2}{\sum\nolimits_{i=1}^m h_{1_i} (1+\sum\nolimits_{l=2}^{K_i}{\rho h_{l_i}})}, \quad m\in\mathcal{M}. \nonumber
\end{align}

\begin{comment}%No interferers
In the case of no interferers, the received signal vector in %during 
transmission $i\in\mathcal{M}$ is $\mathbf{r}_i =a_i \mathbf{s}+\mathbf{n}_i$, where $\mathbf{s}\in \mathbb{C}^n$ is the desired signal, $a_i$ is the complex amplitude, and $\mathbf{n}_i$ is an $n$-dimensional complex Gaussian noise vector with zero mean and covariance $\sigma_i^2\mathbf{I}_n$. After Chase combining, 
\begin{align*}
\sum\nolimits_{i=1}^M a_i^* \mathbf{r}_i = \sum\nolimits_{i=1}^M a_i^*(a_i \mathbf{s}+\mathbf{n}_i).
\end{align*}
The signal power is $P_S=\big( \sum\nolimits_{i=1}^M |a_i|^2 \mathbf{s} \big)^2=\big( \sum\nolimits_{i=1}^M |a_i|^2 \big)^2$, and the noise power is 
\begin{align*} 
P_N=\mathbb{E}\Big[\big(\sum\nolimits_{i=1}^M a_i^* \mathbf{n}_i \big)^*\big(\sum\nolimits_{i=1}^M a_i^* \mathbf{n}_i \big)\Big]
%&=\mathbb{E}\Big[\sum\nolimits_{i=1}^M |a_i|^2 \mathbf{n}_i^*\mathbf{n}_i \Big]
=\sum\nolimits_{i=1}^M |a_i|^2 \sigma_i^2.
\end{align*}
Hence, the output of Chase combiner is given by $\Big.{\big( \sum\nolimits_{i=1}^M |a_i|^2 \big)^2}\Big/{\sum\nolimits_{i=1}^M |a_i|^2 \sigma_i^2}$. %If the $\snr$ per user $|a_{i}|^2/\sigma_i^2=\rho$ and $\sigma_i^2=\sigma^2$ for all $i$, then the output of the combiner equals $\Big.{\big( \sum\nolimits_{i=1}^M \rho\sigma^2 \big)^2}\Big/{\sum\nolimits_{i=1}^M \rho\sigma^2 \sigma^2}=\rho M$. 
\end{comment}

%%%%%%%%%%%%%%%%%%%%%%%%%%%%%%%%%%%%%%%%%
\subsection{Proof of Proposition \ref{OutageGeneralM-nofading}}
\label{App:Appendix-OutageGeneralM-nofading}
The composite arrival process of the new devices plus the retransmission attempts in the steady state can be approximated by a homogeneous Poisson process with density $\lambda_M \geq \lambda$ \cite[Assumption 1]{Dhillon2014tcomm}. Therefore, the probability of outage given up to a maximum of $M$ attempts is
\begin{align*}
\PfailIBL(\lambda,L,B,M)=\sum\limits_{k_1=\kopt_1+1}^{\infty}\,\, \sum\limits_{k_2=\kopt_2(k_1)+1}^{\infty}\,\,\cdots\sum\limits_{\kopt_M=\kopt_M(\mathcal{K}_{M-1})+1}^{\infty}\,\, \prod\limits_{i=1}^{M} D\Big(k_i,\frac{\lambda_M}{B}\Big),
%=\left(\frac{\exp(-\lambda_M/B)}{1-\exp(-\lambda_M/B)}\right)^M\sum\limits_{k_1>k_{\max,1}} \sum\limits_{k_2>k_{\max,2}(k_1)} \cdots \sum\limits_{k_M>k_{\max,M}(k_{1:{M-1}})} \left[ \prod\limits_{m=1}^M \frac{(\lambda_M/B)^{k_m}}{ k_m!}\right]
\end{align*}
where $\kopt_m$ is a function of $\mathcal{K}_{m-1}=\{k_1,\hdots, k_{m-1}\}$ for $2\leq m\leq M$ to be calculated based on the rate requirement. Note that $\PfailIBL(\lambda,L,B,m)$ monotonically decreases in $m$. 
The aggregate arrival rate $\lambda_M$ with up to $M$ total transmissions is $\lambda_M=\Lopt\big[1+\sum\nolimits_{m=1}^{M-1}\PfailIBL(\lambda,L,B,m) \big]$. 
%For successful transmission, we require that $\frac{L}{n}\leq C(\SINRm)$ for $1\leq k_m\leq \kopt_m(\mathcal{K}_{m-1})$ for $m\in \mathcal{M}$. Using (\ref{chaseSINR}), we derive $\sum\nolimits_{i=1}^m k_i \leq \big \lfloor m+\frac{m^2}{\Gamma}-\frac{m}{\rho}\big\rfloor,\, m\in \mathcal{M}$, where $\Gamma=2^{\frac{L}{n}}-1$, yielding the lower limit of the summation in (\ref{PoutGeneralShannon}), as given by (\ref{jmaxgeneralM}).

%%%%%%%%%%%%%%%%%%%%%%%%%%%%%%%%%%%%%%%%%
\subsection{Proof of Proposition \ref{ShannonOutageFadingGeneralM}} 
\label{App:Appendix-ShannonOutageFadingM}
The outage probability given up to a maximum of $M$ transmissions with Rayleigh fading is 
\begin{align}
\label{OutageProbFadingGeneralM}
\PfailIBL(\lambda,L,B,M)%=\mathbb{P}\Big(\Big.\underset{m\in\mathcal{M}}{\max}\{C(\SINRm)\}<\frac{L}{n} \Big\vert k_{1:M}\geq 1\Big)\\
&=\mathbb{E}\Big[\Big.\mathbb{P}\Big(\underset{m\in\mathcal{M}}{\max}\{C(\SINRm)\}<\frac{L}{n}\Big\vert \mathcal{K}_M\Big)\Big]\nonumber\\
%&=\mathbb{E}\Big[\mathbb{P}\Big(\underset{m\in\mathcal{M}}{\max}\Big\{\log_2\Big(1+\frac{m^2\rho h}{m+\sum\nolimits_{i=1}^m{I_{k_i}}}\Big)\Big\}<\frac{L}{n}\Big\vert k_{1:M}\Big)\Big]\\
&\stackrel{(a)}{=}\mathbb{E}_{\mathcal{K}_M}\Big[\mathbb{E}_{I_{\mathcal{K}_M}}\Big[\mathbb{P}\Big(h<\frac{\Gamma}{\rho m^2}\big(m+ \sum\limits_{i=1}^m{I_{K_i}}\big)\Big\vert I_{\mathcal{K}_M}, \mathcal{K}_M\Big)\Big]\Big]\nonumber%\\
\end{align}
\begin{align}
&\stackrel{(b)}{=}\mathbb{E}_{\mathcal{K}_M}\Big[\mathbb{E}_{I_{\mathcal{K}_M}}\Big[\prod\nolimits_{m=1}^M 1-e^{-\frac{\mu\Gamma}{\rho m^2}\big(m+ \sum\limits_{i=1}^mI_{K_i}\big)}\Big\vert I_{\mathcal{K}_M}, \mathcal{K}_M\Big]\Big]\nonumber\\
&=1-e^{-\mu\Gamma\rho^{-1}}\mathbb{E}_{K_1}\left[\left. \mathcal{L}_{I_{K_1}}(\mu\Gamma\rho^{-1})\right\vert K_1\right]
-e^{-2\mu\Gamma\rho^{-1}}\mathbb{E}_{\mathcal{K}_2}\left[\left. \mathcal{L}_{I_{K_1}}(\mu\Gamma\rho^{-1})\mathcal{L}_{I_{K_2}}(\mu\Gamma\rho^{-1})\right\vert \mathcal{K}_2\right]\nonumber\\
&+e^{-3\mu\Gamma\rho^{-1}}\mathbb{E}_{\mathcal{K}_2}\left[\left. \mathcal{L}_{I_{K_1}}(2\mu\Gamma\rho^{-1})\mathcal{L}_{I_{K_2}}(\mu\Gamma\rho^{-1})\right\vert \mathcal{K}_2\right]
+\hdots\nonumber\\
&\stackrel{(c)}{=}\sum\limits_{l_1=0}^1\,\sum\limits_{l_2=l_1}^{l_1+1}\,\hdots \sum\limits_{l_M=l_{M-1}}^{l_{M-1}+1} (-1)^{l_M}\prod\limits_{i=1}^{M} f\Big(\mu,\frac{\lambda_M}{B},l_i\Gamma\Big),
\end{align} 
where we used the shorthand notation for interference as $I_{\mathcal{K}_m}\coloneqq\{I_{K_1}, \hdots, I_{K_m}\}$, and $(a)$ follows from the definition of $\sinr$ in (\ref{SINRdefnRayleighFading}) and $(b)$ from that $h\sim \exp(\mu)$, and $(c)$ from the definition of $D\big(k,\frac{\lambda_M}{B}\big)$, independence of $I_{K_i}$'s, and $\mathcal{L}_{I_{K_i}}(s)=\mathcal{L}_{I_{K_1}}(s)$ for $i\in \mathcal{M}$. Using the Laplace transform of the interference given $k$ arrivals, which is denoted by $\mathcal{L}_{I_{k}}(s)$ and expressed as
\begin{align}
\mathcal{L}_{I_{k}}(s)=\mathbb{E}[e^{-sI_{k}}]=\mathbb{E}\Big[e^{-s\rho\sum\limits_{i=1}^{k-1}g_{i_m}}\Big]=\mathbb{E}\Big[\prod\limits_{i=1}^{k-1}e^{-s\rho g_{i}}\Big]=\Big(\frac{\mu}{\mu+s\rho}\Big)^{k-1}.\nonumber
\end{align}
Averaging $\mathcal{L}_{I_{K}}(s)$ using the distribution $D(K,\lambda_B)$ given in (\ref{Dfunction}), we obtain 
\begin{align}
\label{averageLTofICI}
\mathcal{L}_{\ICI}(s)=\mathbb{E}_{K}[\mathcal{L}_{I_{K}}(s)]=\sum\limits_{k=1}^{\infty}{D(k,\lambda_B)\Big(\frac{\mu}{\mu+s\rho}\Big)^{k-1}}=(s\mu^{-1}\rho+1) \frac{e^{\lambda_B/(s\mu^{-1}\rho+1)}-1}{e^{\lambda_B}-1}.
\end{align}
To simplify the notation, we let $s=\mu\Gamma\rho^{-1}$ and define $f(\mu,\alpha,\Gamma)=e^{-\mu\Gamma\rho^{-1}}(\Gamma+1) \frac{e^{\alpha/(\Gamma+1)}-1}{e^{\alpha}-1}$, using which the final expression $(c)$ is evaluated.

%%%%%%%%%%%%%%%%%%%%%%%%%%%%%%%%%%%%%%%%%
\subsection{Proof of Proposition \ref{Bopt_generalM_nofading-HighSNR}}
\label{App:Appendix-Bopt_generalM_nofading-HighSNR}
At high $\snr$, since $\kopt_1=1$, %the $\sinr$ becomes 
$\SINR=\rho$. Since the total number of resources $N$ is split into $M$ retransmissions and $B$ bins to successfully transmit $L$ bits, in the IBL regime, we have $B\approx \frac{N}{ML}\log_2(1+\rho)$. 
The relation between the target outage rate $\delta$ and $\alpha_M(\delta)=\lambda_M/B$ is  
\begin{align*}
\PfailIBL(\lambda,L,B,M)=\delta=\Big(1-\frac{\alpha_M(\delta)}{e^{\alpha_M(\delta)}-1}\Big)^M=\PfailIBL(\lambda,L,B,m)^{\frac{M}{m}},\,\, m\in\mathcal{M},
\end{align*}
where $\alpha_{M}(\delta)=\alpha_1(\delta^{1/M})=\alpha(\delta^{1/M})$. The relation between $\lambda$ and the aggregate arrival rate is 
\begin{align*}
\lambda%&=\lambda_M/\Big[1+\sum\nolimits_{m=1}^{M-1}\PfailIBL(\lambda,L,B,m)\Big]\nonumber\\
&=\lambda_M\Big/\Big[1+\sum\limits_{m=1}^{M-1}\delta^{\frac{m}{M}}\Big]=\lambda_M(1-\delta^{1/M})/(1-\delta).
\end{align*}
The optimal number of transmissions can be found solving
%\begin{align*}
$\Mopt=\underset{M \geq 1}{\arg\max}\,\, \lambda_M(1-\delta^{1/M})$, 
%=\underset{M \geq 1}{\arg\max}\,\,  \Bopt \alpha_M(\delta) (1-\delta^{1/M}),
%\end{align*}
which is also a function of $\delta$ at high $\snr$, and $B$ is given as $B %\approx B_1/M
\approx \frac{N}{ML}\log_2(1+\rho)$.

%%%%%%%%%%%%%%%%%%%%%%%%%%%%%%%%%%%%%%%
%%
\subsection{Proof of Proposition \ref{lowSNRsufficientcondition}}
\label{App:Appendix-lowSNRsufficientcondition}
The condition $\rho <2^{2L/N}-1$ can be rewritten as  
%\begin{align}
%\label{suff_lowSNR_1}
$\frac{N}{L}\log_2(1+\rho)<2$.  
%\end{align}
Given $M$, the capacity constraint in (\ref{CapacityConstraintGeneralModel}) is %requires that 
$C(\SINRm)\geq \frac{L}{n},\,\,m\in\mathcal{M},\,\, n\leq \frac{N}{MB}$. Equivalently, we have $\frac{N}{L}\log_2(1+\SINRm)\geq MB,\,\,m\in\mathcal{M}$. In addition, due to the interference, the Chase combiner output as a result of $m$ transmissions is upper bounded as $\rho m \geq \SINRm,%=\frac{\rho m^2}{m+\rho \big(\sum\nolimits_{i=1}^m k_i-m\big)},
\,\, m\in\mathcal{M}$. %\nonumber 	$\rho \geq \SINRone$
Hence, given $m=1$, we can conclude that $\frac{N}{L}\log_2(1+\rho)\geq\frac{N}{L}\log_2(1+\SINRone)\geq MB$. Combining this with %(\ref{suff_lowSNR_1}), 
the $\snr$ condition, 
we have $MB<2$, which is a sufficient condition for $\Bopt=\Mopt=1$.

%%%%%%%%%%%%%%%%%%%%%%%%%%%%%%%%%%%%%%%%%
\subsection{Proof of Proposition \ref{Lambda_approx_low_SNR}} 
\label{App:Appendix-Lambda_approx_low_SNR}
The proof follows from approximating the %maximum 
device arrival rate %$\Lopt$ 
per bin $\lambda/B$ 
using the Gaussian distribution that has the same mean $\lambda/B$ and variance $\lambda/B$ as the Poisson distribution, i.e., $\mathcal{N}\big(\lambda/B,\lambda/B\big)$ for sufficiently large values of $\lambda/B$, ($\lambda/B>1000$). At low $\snr$, as $k_{\max}\to \infty$, we have $\lambda/B \to \infty$. Therefore, the normal distribution is an excellent approximation to the Poisson distribution. Hence, given a target $\sinr$ outage rate $\delta$ per device, the probability of outage is approximated as $\delta=D\big(k_1,\frac{\lambda}{B}\big) \approx \Q\Big(\frac{\kopt_1-\lambda/B}{\sqrt{\lambda/B}}\Big)$, where $\Q(x)=\frac{1}{\sqrt{2\pi}}\int\nolimits_{x}^{\infty}{e^{-u^2/2}{\rm d}u}$ is the tail probability of the standard normal distribution. Therefore, %$\Q^{-1}(\delta)=\sqrt{2}\erf^{-1}(1-2\delta)\approx \frac{\kopt_1-\lambda/B}{\sqrt{\lambda/B}}$ or equivalently, $\Q^{-1}(\delta)\sqrt{\lambda/B}+\lambda/B\approx \kopt_1$ or
$\sqrt{\frac{\lambda}{B}}\approx \sqrt{\kopt_1+\frac{(\Q^{-1}(\delta))^2}{4}}-\frac{\Q^{-1}(\delta)}{2}$, where $\Q^{-1}$ is the inverse $\Q$-function, and $\kopt_1\leq \frac{1}{\Gamma}-\frac{1}{\rho}+1$. Thus, we can infer that $\lambda/B$ scales linearly with $\kopt_1$. However, since $B\leq \frac{N}{L}\log_2\Big(1+\frac{\rho}{1+\rho(\kopt_1-1)}\Big)$, the number of bins, $B$, decays sub-linearly with $\kopt_1$. Thus, their product $\lambda$ increases as $\kopt_1$ increases. 
 
Because we cannot achieve arbitrarily high $\lambda/B$ picking $B$ arbitrarily small, and $B\in\mathbb{Z}^+$, we require $B=1$. The above approximation with $B=1$ gives the maximum supportable rate $\Lopt$ at low $\snr$. Hence, given a target $\sinr$ outage rate $\delta$, we evaluate the probability of outage for $\Bopt=\Mopt=1$ using (\ref{PoutGeneralShannon}) and $\kopt_m(\mathcal{K}_{m-1})$ using (\ref{jmaxgeneralM}). Then, we want to determine the optimal value of $\lambda$ that satisfies 
%\begin{align}
$\delta=\sum\limits_{k_1>\kopt_1}D\big(k_1,\lambda\big)$, %\nonumber  
%\end{align}
where $\lambda$ is increasing in $\kopt_1$. %where $\lambda/B$ is increasing in $\kopt_1$. Increasing $\kopt_1$ further, $\lambda/B$ can increase up to $\lambda$, i.e. $B$ may decrease down to $\Bopt=1$. 
Hence, we can approximate the maximum %device 
arrival rate $\Lopt$ using
\begin{align}
\label{deltaQapprox}
\delta\approx\Q\big(\big.{(\kopt_1-\Lopt)}\big/{\sqrt{\Lopt}}\big),\quad \kopt_1=\Big\lfloor \frac{1}{\Gamma}-\frac{1}{\rho}+1\Big\rfloor,%\approx\Q^{-1}(\delta)\sqrt{\Lopt}+\Lopt. 
\end{align} 
where $\Gamma=2^{\frac{L}{N}}-1$. Thus, at low $\snr$, if $\rho$ is in a range such that $\Big(1+\frac{1}{\Gamma}-\kopt_1\Big)^{-1} \leq \rho <  \Big(\frac{1}{\Gamma}-\kopt_1\Big)^{-1}$, it yields $\Bopt=1$ and $\Lopt$ that satisfies (\ref{deltaQapprox}. 	       
%For the special case with $M=1$, we also have $\log(1+\SINRone)\approx \frac{L}{N}$.
Using (\ref{deltaQapprox}), the maximum average arrival rate that can be supported at low $\snr$ can be determined.

%%%%%%%%%%%%%%%%%%%%%%%%
\subsection{Proof of Proposition \ref{FBLoutagenofadinggeneralM}}
\label{App:AppendixFBLoutagenofadinggeneralM}
The probability of outage for the FBL model at constant $\snr$ with $M$ retransmissions equals
\begin{align}
\PfailFBL(\lambda,L,B,M)%&=\mathbb{P}\left(\left.\underset{m\in\mathcal{M}}{\max}\{\Cfbl_{n,\varepsilon}(\SINRm)\}<\frac{L}{n} \right\vert \mathcal{K}_M\right)\nonumber\\
&=\mathbb{E}\Big[\left.\mathbb{P}\Big(\underset{m\in\mathcal{M}}{\max}\{\Cfbl_{n,\varepsilon}(\SINRm)\}<\frac{L}{n}\right\vert \mathcal{K}_M\Big)\Big]\nonumber\\
&\stackrel{(a)}{=}\mathbb{E}\Big[\Big. \mathbb{P}\Big(\frac{n\Cfbl_{n,\varepsilon}(\SINRm)-L}{\sqrt{nV(\SINRm)}}<0, m\in\mathcal{M}\Big\vert \mathcal{K}_M \Big)\Big]\nonumber\\
&\stackrel{(b)}{=}\mathbb{E}\left[\Big.\mathbb{P}\Big(\mathcal{Z}<-f(n,L,\SINRm), m\in\mathcal{M}\Big\vert \mathcal{K}_M \Big)\right]\nonumber\\
&\stackrel{(c)}\approx\sum\limits_{k_1=1}^{\infty}\, \cdots \,\sum\limits_{k_M=1}^{\infty} \,\prod\limits_{i=1}^M D\Big(k_i,\frac{\lambda_M}{B}\Big)\Q(f(n,L,\sinri)),\nonumber
%&\stackrel{(d)}\approx\sum\limits_{k_1\geq 1}\, \cdots \,\sum\limits_{k_m\geq1} \,\prod\limits_{m\in\mathcal{M}} D\Big(k_m,\frac{\lambda_M}{B}\Big)\eber(n,L,\SINRm),\nonumber
\end{align}
where $(a)$ follows from the definition of $\Cfbl$ using (\ref{FBLrate}) and (\ref{dispersion}), $(b)$ from the normal distribution approximation such that $\mathcal{Z}\sim%$ is standard normal random variable $
\mathcal{N}(0,1)$, and $(c)$ from the approximation based on standard Gaussian ccdf $\Q(x)$ and the relation $\Q(x)=1-\Q(-x)$, and the final result from (\ref{BlockErrorRate}).

%%%%%%%%%%%%%%%%%%%%%%%%%%%%%%%%
\subsection{Proof of Lemma \ref{fconcaveofx}}
\label{App:Appendixfconcaveofx}
%monotonically increasing, positive, and concave
We rewrite $f(n,L,X)$ as $f(n,L,X)=\frac{C(X)+b}{\sqrt{V(X)/n}}$, and let $a=\frac{\sqrt{n}}{\log_2e}$, $b=\frac{0.5\log_2n-L}{n}$, $f_1(x)=\frac{(1+x)}{\sqrt{x(x+2)}}$ and $f_2(x)=\log_2(1+x)$. Then, we can express $f(n,L,x)$ as $f(n,L,x)=af_1(x)[f_2(x)+b]$. Then, we can derive the identities $\frac{\partial}{\partial x}f_1(x)=-\frac{1}{(x(x+2))^{3/2}}$, $\frac{\partial^2 }{\partial x^2}f_1(x)=\frac{3(1+x)}{(x(x+2))^{5/2}}$, $\frac{\partial}{\partial x}f_2(x)=\frac{1}{\log(2)}\frac{1}{1+x}$ and $\frac{\partial^2}{\partial x^2}f_2(x)=-\frac{1}{\log(2)}\frac{1}{(1+x)^2}$. We also assume that $b\leq 0$. From $b=\frac{0.5\log_2n-L}{n}\leq 0$ that is true when $n\leq 2^{2L}$, which is a realistic assumption for the FBL regime.

\begin{enumerate}[(i)]
%%%
\item {\bf Monotonically increasing.} 
Take the partial derivative of $f(n,L,x)$ with respect to $x$:
\begin{align}
\frac{\partial f(n,L,x)}{\partial x}&=a\left[\frac{\partial}{\partial x}f_1(x)[f_2(x)+b]+f_1(x)\frac{\partial}{\partial x}f_2(x)\right]\nonumber\\
&=\sqrt{n}\frac{1}{(x^2+2x)^{1/2}}\left[1-\frac{\log_2(1+x)+b}{\log_2(e)(x^2+2x)}\right]\stackrel{(a)}{>}0,\,\,x\geq 0.\nonumber
\end{align}
where $(a)$ is due to $x\geq 0$, $1-\frac{\log(1+x)}{x^2+2x}>0$, and  $b\leq 0$. Thus, $f(n,L,x)$ is increasing in $x$.

Note that $f_1(x)$ and $f_2(x)$ are increasing in $x\geq 0$. Hence, $f(n,L,x)$ is also increasing in $x$. Let $x\leq y$. Then, $f(n,L,x)=af_1(x)[f_2(x)+b]\leq af_1(y)[f_2(y)+b]=f(n,L,y)$. Hence, $f(n,L,x)$ is a monotonically increasing function of $x$.

%%%
\item {\bf Positive.}
From definition of $\Cfbl$, it is required to satisfy
\begin{align}
\Cfbl_{n,\varepsilon}(\SINR)= \log_2(1+\SINR)-\sqrt{\frac{V(\SINR)}{n}}\Q^{-1}(\varepsilon)+\frac{0.5\log_2(n)}{n}+o\left(\frac{1}{n}\right)\geq \frac{L}{n},\nonumber
\end{align}
which yields the condition that $f(n,L,\SINR)=\frac{\log_2(1+X)+b}{\sqrt{V(X)/n}}\geq \Q^{-1}(\varepsilon)$. Note that since $\Q^{-1}(\varepsilon)>0$ as long as $\varepsilon<0.5$. This condition guaranties the positivity of $f(n,L,x)$.

%%%
\item {\bf Concave.} To prove that $f(n,L,x)$ is concave, derive its second partial derivative %of $f(n,L,x)$ with respect to $x$ 
as
\begin{align*}
\frac{\partial^2 }{\partial x^2}f(n,L,x)&=a\left[\frac{\partial^2 }{\partial x^2}f_1(x)[f_2(x)+b]+2\frac{\partial}{\partial x}f_1(x)\frac{\partial}{\partial x}f_2(x)+f_1(x)\frac{\partial^2 }{\partial x^2}f_2(x)\right]\\
&\stackrel{(a)}{=}\sqrt{n}\frac{x+1}{(x^2+2x)^{3/2}}\left[\frac{3[\log_2(1+x)+b]}{\log_2(e)(x^2+2x)}-1-\frac{1}{(1+x)^2}\right]<0,\,\, x\geq 0,
\end{align*}
where $(a)$ follows from that $\frac{3\log_2(1+x)}{\log_2(e)(x^2+2x)}-1-\frac{1}{(1+x)^2}<0$ for $x\geq 0$ and from $b\leq 0$.
\end{enumerate}

%%%%%%%%%%%%%%%%%%%%%%%%%
\subsection{Proof of Proposition \ref{BoundsonQfunction}}
\label{App:AppendixBoundsonQfunction}
Since the second derivative of $\Q(x)$ is always greater than or equal to zero, $\Q(x)$ is convex and increasing in $x\geq 0$. This implies that for a random variable $X$, we have $\mathbb{E}[\Q(X)]\geq \Q(\mathbb{E}[X])$.

From Lemma \ref{fconcaveofx}, $f(n,L,X)$ is monotonically increasing, positive, and concave for $X\geq 0$ and $\Q(f(n,L,X))$ is decreasing in $X$. Then, $\mathbb{E}[f(n,L,X)]\leq f(n,L,\mathbb{E}[X])$. Note also that $\Q(X)$ is decreasing, monotone, and convex for $X>0$. Thus, $\mathbb{E}[\Q(X)]\geq \Q(\mathbb{E}[X])$. Then, using that $(a)$ $\Q$ is convex, and $(b)$ $f(n,L,X)$ is concave and $\Q$ is decreasing in $X$, we have that
\begin{align}
\mathbb{E}[\Q(f(n,L,X))]\overset{(a)}{\geq} \Q(\mathbb{E}[f(n,L,X)]) \overset{(b)}{\geq} \Q(f(n,L,\mathbb{E}[X])). \nonumber
\end{align}

%%%%%%%%%%%%%%%%%%%%%%%%%%%%%%%%%%%%%%%%%
\subsection{Proof of Proposition \ref{HighSNRFBLcoverage}}
\label{App:Appendix-HighSNRFBLcoverage}
At high $\snr$, the resource symbols are %have to be 
shared such that $k_m=1$, $m\in\mathcal{M}$. Hence, from (\ref{PoutdefinitionFBL}), 
\begin{align}
\PfailFBL(\lambda,L,B,M)=\prod\limits_{m=1}^M D\Big(1,\frac{\lambda_M}{B}\Big)\eber(n,L,\rho m)\approx D\Big(1,\frac{\lambda_M}{B}\Big)^M\prod\limits_{m=1}^M \Q(f(n,L,\rho m)).\nonumber
\end{align}

%%%%%%%%%%%%%%%%%%%%%%%%%%%%%%%%%%%%%%%%%
\subsection{Proof of Proposition \ref{LowSNRFBLcoverageBounds}}
\label{App:Appendix-LowSNRFBLcoverageBounds}
At low $\snr$, for the FBL regime, using (\ref{JensenonQAndDecreasingQ}) and (\ref{OutageLowSNRFBL}), the probability of outage can be bounded as 
%\begin{align}
$\PfailFBL(\lambda,L,B,1)\geq \Q(f(n,L,\mathbb{E}_{\mathcal{K}}[\SINRone]))$, %\nonumber
%\end{align}
where the mean $\SINR$ is given by
\begin{align}
\mathbb{E}_{\mathcal{K}}[\SINRone]=\sum\limits_{k=1}^{\infty}D\Big(k,\frac{\lambda}{B}\Big)\frac{\rho}{1+\rho (k-1)}%\nonumber\\
=\frac{\rho}{1-\rho}\left[1+\frac{\big(-\frac{\lambda}{B}\big)^{1-\frac{1}{\rho}}}{e^{\lambda/B}-1}\Big(\GF(\rho^{-1})-\GF\Big(\rho^{-1},-\frac{\lambda}{B}\Big)\Big)\right],\nonumber
\end{align}
where $\GF(s)=\int _{0}^{\infty }t^{s-1}\,e^{-t}\,{\rm {d}}t$ is the Gamma function, and $\GF(s,x)=\int _{x}^{\infty }t^{s-1}\,e^{-t}\,{\rm {d}}t$ is the upper incomplete gamma function. Using the Chernoff bound of the $\Q$-function, which is $\Q(x)\leq e^{-x^2/2}$ for $x>0$, we can obtain an upper bound for the probability of outage as follows:
\begin{align}
\PfailFBL(\lambda,L,B,1)\leq \mathbb{E}_{\mathcal{K}}\left[\exp{\Big(-\frac{1}{2}f(n,L,\SINRone)^2\Big)}\right]%\nonumber\\
=\sum\limits_{k=1}^{\infty}D\Big(k,\frac{\lambda}{B}\Big)\exp{(-f(n,L,\sinrone)^2)}.\nonumber
\end{align}

\end{appendix}

\begin{spacing}{1.36}
\bibliographystyle{IEEEtran}
\bibliography{M2MLowLatencyreferences}
\end{spacing}

\end{document}